\documentclass[12pt]{amsart}
\usepackage{graphicx}
\textheight 7.7truein
\textwidth 6.4truein
\oddsidemargin +0truein
\evensidemargin +0truein
\usepackage{graphicx}
\usepackage{epsfig}
\usepackage{pstricks}

\newtheorem{thm}{Theorem}[section]

\newtheorem{defn}[thm]{Definition}
\newtheorem{lemma}[thm]{Lemma}

\newtheorem{cor}[thm]{Corollary}
\newtheorem{remark}[thm]{Remark}
\newtheorem{example}[thm]{Example}

\usepackage{amsmath}
\usepackage{amsxtra}
\usepackage{amscd}
\usepackage{amsthm}
\usepackage{amsfonts}
\usepackage{amssymb}
\usepackage{eucal}
\newcommand{\bmb}{\left( \begin{array}{rr}}
\newcommand{\enm}{\end{array}\right)}

\newcommand{\cT}{\mathcal T}

\newcommand{\g}{{\mathfrak{g}}}

\newcommand{\cA}{\mathcal A}

\newcommand{\Z}{{\mathbb Z}}

\newcommand{\bm}{{\mathbf m}}
\newcommand{\bj}{{\mathbf j}}

\newcommand{\bx}{{\mathbf x}}
\newcommand{\by}{{\mathbf y}}

\newcommand{\bX}{{\mathbf X}}

\numberwithin{equation}{section}

\begin{document}
\title{The solution of the quantum $A_1$ $T$-system for arbitrary boundary}
\author{Philippe Di Francesco} 
\address{
Institut de Physique Th\'eorique du Commissariat \`a l'Energie Atomique, 
Unit\'e de Recherche associ\'ee du CNRS,
CEA Saclay/IPhT/Bat 774, F-91191 Gif sur Yvette Cedex, 
FRANCE. e-mail: philippe.di-francesco@cea.fr}

\author{Rinat Kedem}
\address{Department of Mathematics, University of Illinois MC-382, Urbana, IL 61821, U.S.A. e-mail: rinat@illinois.edu}
\date{\today}
\begin{abstract}
We solve the quantum version of the $A_1$ $T$-system by use
of quantum networks. The system is interpreted as a particular set of
mutations of a suitable (infinite-rank) quantum cluster algebra, and 
Laurent positivity follows from our solution.
As an application we re-derive the corresponding quantum network
solution to the quantum $A_1$ $Q$-system
and generalize it to the fully non-commutative case. We give the
relation between the quantum $T$-system and the quantum
lattice Liouville equation, which is the quantized $Y$-system.
\end{abstract}

\maketitle

\section{Introduction}\label{introduction}

The $T$-systems \cite{KR,KNS} satisfied by the transfer matrices of
the generalized Heisenberg model or the $q$-characters of quantum
affine algebras \cite{FR} can be considered as discrete
dynamical systems with special initial conditions. More generally, the
equations of these systems can be shown \cite{DFK08} to be mutations
in an infinite-rank cluster algebra \cite{FZI}. As such, their
solutions under general boundary conditions \cite{DKTsys,DF} are expected
to satisfy special properties such as the Laurent property and positivity.

Among discrete dynamical systems, the cluster algebras of Fomin and
Zelevinsky \cite{FZI} hold a special place. These describe the
evolution of data vectors (clusters) attached to the nodes of an
infinite regular tree via mutations along the edges.  Mutations are
defined in such a way that the following Laurent property is
guaranteed: any cluster data may be expressed as a Laurent polynomial
of the cluster variables at any node of the tree.  It was conjectured in \cite{FZI}
and proved in several particular cases (in particular in the so-called 
acyclic cases \cite{POSIT,FZIV,FRISES},
or that of clusters arising from surfaces \cite{MSW}) 
that these polynomials always
have non-negative integer coefficients (Laurent positivity), a
property that still awaits a good general combinatorial interpretation.

Cluster algebras turn out to be quite universal, and have found
applications in various fields, such as the study of non-linear
recursions, the geometry of Teichm\"uller space, quiver representations,
wall crossing formulas etc.

The relation between the recursion satisfied by the ($q$-) characters of
KR-modules of quantum affine algebras on the one hand, and cluster
algebras on the other, was found in \cite{Ke07,DFK08}. Such systems
are known as $Q$-sytems or $T$-systems when they are supplemented 
by special
boundary conditions. It is known that such equations can be
interpreted as discrete integrable systems: In the case of an $A_r$
type algebra, the $T$-system was identified as the discrete Hirota
equation \cite{KLWZ}. It is also known as the tetrahedron equation in
combinatorics, and arises in the context
of the Littlewood-Richardson coefficients for tensor products of
irreducible representations of $A_r$ \cite{KT} and domino tilings of the
Aztec diamond \cite{SPY}.

Solutions to the $Q$ and $T$-systems have been constructed by various
authors \cite{KLWZ}.  Recently, a transfer matrix solution was given
for the $A_1$ $T$-system in the case of arbitrary boundary
conditions. The latter is also known in the combinatorics literature as frieze
equation \cite{FRISES}. This solution was generalized to the case of
$A_r$ in \cite{DF}. It amounts to representing general solutions of
the system as partition functions for paths on a positively
weighted graph or network. The graph is determined solely by the initial
conditions.  

The connection to cluster algebras is as follows \cite{Ke07,DFK08}:
One shows that the admissible initial data for the $T$-systems form a
subset of the clusters a cluster algebra, and that the mutations in
this algebra are local transformations which are the $T$-system
equations. Thus the expression of the solutions as partition functions
for positively weighted paths implies the Laurent positivity for these
particular clusters.

An important question is how to quantize such evolution equations
\cite{FV,FKV}. The quantization in the case of cluster algebras
generally was given by Berenstein and Zelevinsky \cite{BZ}.  Quantum
cluster algebras are non-commutative algebras where the cluster
variables obey special commutation relations depending on a
deformation parameter $q$.  Mutations are defined in such a way that
the Laurent property is preserved, and a positivity conjecture is also
expected to hold: That is, any cluster variable may be expressed
as a Laurent polynomial of the variables in any other cluster seed, with
coefficients in $\Z_+[q,q^{-1}]$. Quantum cluster algebras were used
in \cite{DFK10} to define quantum $Q$-systems for $A_r$.

In the present paper, we focus on the $T$ system for $A_1$, and
construct its quantum version via the cluster algebra connection. 
We gather a few definitions in Section \ref{definitions} and construct
the quantum $A_1$ $T$-system in Section \ref{section3}.
We then express the general solution in Section \ref{section5}
by use of a non-commutative transfer
matrix, quantizing the solution of \cite{DF}. The main result
of the paper is Theorem \ref{solq},
which implies an interpretation of
the solution as a partition function for ``quantum paths" with
step weights which are non-commutative Laurent
monomials in the initial data, thereby proving Laurent positivity for the relevant
clusters.

The $2$-periodic solutions of the $T$-systems satisfy $Q$-system
equations, and this generalizes to the quantum case. The $A_1$
$Q$-system has a fully non-commutative generalization introduced by
Kontsevich in the framework of wall-crossing phenomena in
non-commutative Donaldson-Thomas invariant theory. The solution was
given in \cite{DFK09b} for this system using the method of
\cite{DFK3}.  We revisit this system in Section \ref{section6} and
formulate a fully non-commutative (as opposed to $q$-commutative)
version of the network transfer matrices used in Section 4 to solve the
$T$-system. This gives an alternative solution for the non-commutative
$A_1$ $Q$-system.

Finally in Section \ref{conclusion} we give the relation between the
quantum $A_1$ $T$-system and the discrete quantum Liouville equation
of Faddeev et al \cite{FV,FKV}. This equation can also be viewed as a
non-commutative $Y$-system.

\medskip \noindent{\bf Acknowledgments.} We thank L. Faddev for
illuminating remarks, and the organizers of the MSRI semester program
on ``Random Matrix Theory, Interacting Particle Systems and Integrable
Systems" where this work was completed.  PDF received partial support
from the ANR Grant GranMa. The work of RK is supported by NSF grant
DMS-0802511.

\section{Definitions}\label{definitions}

\subsection{Cluster algebras and quantum cluster algebras}

We use a simplified version of the definition of Fomin and Zelevinsky
\cite{FZI} of cluster algebras of geometric type with trivial
coefficients.

\subsubsection{Cluster algebras of geometric type}
A cluster algebra is the commutative ring generated by the union of
commutative variables called cluster variables. The generators are
related by rational transformations called
{\em mutations} determined by an {\em exchange matrix}, which governs the
discrete dynamics of the system. 

For the purposes of his paper, it is sufficient to consider a cluster
algebra of rank $n$, with a seed cluster consisting of $n$ cluster
variables $\bx=(x_1,...,x_n)$ and an $n\times n$ skew-symmetric
exchange matrix $B$. (We will also have occasion to consider cluster
algebras of infinite rank ($n\to\infty$). It will be clear from our
solution that when such algebras occur they are well-defined as a
completion of the finite rank case.)

Clusters are pairs $(\bx(t),B(t))$ where $t$ is a label of a node of a
complete $n$-tree. Each node is associated with a cluster. The edges
of the tree are labeled in such a way that each node is connected to
exactly one edge with label $k$ where $k\in [1,...,n]$.

The clusters at nodes $t$ and $t'$ connected by an edge labeled
$k$ are related to each other by a {\em mutation}, which acts as a rational
transformation on the component $x_k(t)$:
$\bx(t')=\mu_k (\bx(t))$ where
\begin{equation}\label{mutation}
x_j(t')= \left\{ \begin{array}{ll} x_j(t), & k\neq j,\\
(x_k(t))^{-1}\left(\displaystyle{\prod_{B_{j,k}>0} x_j(t)^{B_{j,k}} + \prod_{B_{j,k}<0} x_j(t)^{-B_{j,k}}}\right),&k=j.
 \end{array} \right.
\end{equation}
A mutation also acts on the exchange matrix $B(t)$, such that
$B(t')=\mu_k(B(t))$ if $t$ and $t'$ are connected by an edge labeled
$k$, and
\begin{equation}\label{Bmutation}
B_{i,j}(t')=\left\{ \begin{array}{ll} -B_{i,j}(t) & \hbox{if $i=k$ or $j=k$, }\\
B_{i,j}(t) + {\rm sign}(B_{i,k}(t))[B_{i,k}(t)B_{k,j}(t)]_+ & \hbox{otherwise}.
\end{array}\right.
\end{equation}
with the notation $[x]_+={\rm Max}(x,0)$.
To define a cluster algebra, it is sufficient to give the seed
$(\bx,B)$ at one single node. This then determines the cluster
variables at all other nodes via iterated mutations.

\subsubsection{Quantum cluster algebras}
It is interesting to consider whether there exist non-commutative
generalizations of cluster algebras, which maintain some of the
properties of cluster algebras. In particular, the Laurent property
\cite{FZLaurent} and the (conjectured in general) positivity of a
cluster variable at any node as an expression in terms of the cluster
variables at any other node. We considered some possible candidates in
\cite{DFK10} motivated by our consideration of the integrable
subcluster algebras described by $Q$-systems and $T$-systems, as well
as the Kontsevich wall-crossing formula. We also considered the
specialization of these non-commutative systems to the simplest type of
non-commutativity, the $q$-deformation. Such systems were first considered by
Berenstein and Zelevinsky in Ref.\cite{BZ}, where they defined ``quantum
cluster algebras''. We give here a simplified version of their
definition which is sufficient for this paper.

A quantum cluster algebra is the skew field of rational functions
generated by the non-commutative cluster variables
$\{\bX(t)=(X_1(t),...,X_n(t))\}$ where $t$ are the labels of the
complete $n$-tree as above. At a node $t$ we have the cluster
$(\bX,B):=(\bX(t),B(t))$ where the exchange matrix is the same as for the usual
cluster algebra. The cluster variables at this node $q$-commute:
\begin{equation}\label{qcommute}
X_i X_j = q^{\lambda_{ij}} X_j X_i.
\end{equation}
Here $\lambda_{ij}$ are the entries of an $n\times n$ skew-symmetric
matrix $\Lambda$. Up to a scalar multiple, we can take $\Lambda$ to be
the inverse of the exchange matrix $B$. According to the definitions
of \cite{FZI} such a matrix $\Lambda$ is ``compatible'' with the
exchange matrix $B$. 

Equivalently, we can define $X_j=e^{a_j}$, where $a_j$ are also
non-commuting variables and the exponential is taken formally. Then the
commutation relations above correspond to $[a_i,a_j]=h \lambda_{i,j}$
where $q=e^h$. 

Clusters at tree nodes $t$ and $t'$ connected by an edge labeled $k$
are related by a mutation. Let $\bX^{\boldsymbol{\alpha}}={\rm
  exp}(\sum_j \alpha_j a_j)$. Then we define $\bX(t') = \mu_k
(\bX(t))$ to be
\begin{equation}\label{qmutation}
X_j(t') = \left\{ \begin{array}{ll}
X_j(t) & \hbox{if $j\neq k$}; \\
\bX^{-e_k + \sum_i [B_{ik}]_+ e_i} + \bX^{-e_k + \sum_i [-B_{ik}]_+e_i}& j=k.\end{array}\right.
\end{equation}
The exchange matrix $B(t')=\mu_k(B(t))$ is the same as in the
commutative case.

\subsection{The $A_1$ $T$-system}

The $T$-systems appear in the solution of exactly solvable
models in statistical mechanics, in the Bethe ansatz of generalized
Heisenberg quantum spin chains based on representations of Yangians
of each simple Lie algebra \cite{KR, KNS}. The transfer
matrices of the model satisfy a recursion relation in the highest
$\g$-weight of the $Y(\g)$-modules corresponding to the auxiliary
space. These relations are called $T$-systems.  In the context of
representation theory, these relations are the equations satisfied by
the $q$-characters \cite{FR} of Kirillov-Reshetikhin modules of
the Yangians, or the associated quantum affine algebra. 

\subsubsection{The $T$-system associated to $A_1$}

These systems provide examples of discrete integrable
systems which are part of a suitable cluster algebra structure
\cite{DFK08}. However, in the representation-theoretical context, a
special initial condition is placed on the variables (corresponding
to the fact that the $q$-character of the trivial representation is
1). Here, we dispense with this special value. Moreover, we
renormalize the variables so that the solutions are positive Laurent
polynomials of the initial data for any initial data. This corresponds
to normalizing the cluster variables of the cluster algebras so that
all coefficients are trivial. 

Thus, with slight abuse of notation, we call the following system the
$A_1$ $T$-system:
\begin{equation}\label{clatsys}
T_{i,j+1}T_{i,j-1}=T_{i+1,j}T_{i-1,j}+1\qquad (i,j \in \Z).
\end{equation}
Here, we consider the set $\{T_{i,j}\vert i,j\in\Z\}$ as commutative
variables. Solutions of the equation are given as functions
of a choice of initial variables. 

\begin{remark}
Upon a simple change of coordinates, this system is also known as
``frieze" equation in combinatorics \cite{Cox,FRISES}.
\end{remark}

\subsubsection{Initial conditions}

Equations \eqref{clatsys} split into two independent
sets of recursion relations, since the parity of $i+j$ is
preserved by the equations. Without loss of generality, let us
restrict to the relations for $\{T_{i,j}\vert i,j\in\Z,\  i+j\equiv0\! \mod 2\}$.
\begin{defn} An admissible initial data set for the $T$-system is a set
\begin{equation}\label{initial}
\bx_{\mathbf j}:=\{T_{i,j_i}\vert  i\in \Z,\ 
i+j_i\equiv 0\!\! \mod 2,\  |j_i-j_{i+1}|=1\}.
\end{equation}
\end{defn}
The solutions of Equation \eqref{clatsys} are determined by iterations
of the evolution equations \eqref{clatsys} starting from
any admissible initial data set.

\begin{defn}
The fundamental initial data (the ``staircase'') is the set
\begin{equation}\label{fundamental}
\bx_0 := \{T_{i,i \,{\rm mod}\, 2}\vert i\in \Z\}
\end{equation}
\end{defn}

\begin{defn}
The boundary corresponding to the initial condition $\bx_{\mathbf j}$ is the
set of points in the lattice $\{(i,j_i)\vert i\in \Z,\  i+j_i\equiv 0\!\! \mod 2\}$.
\end{defn}

A solution of the $T$-system is an expression for $T_{i,j}$ in terms
of $\bx_{\mathbf j}$ for each $(i,j)$.  A general solution of the
$A_1$ $T$-system for arbitrary boundary was given in \cite{FRISES} and
generalized to the case of the $A_r$ algebra, $r\geq 2$, in \cite{DF}.
The solution is given in terms of a matrix representation and
is interpreted as partition functions of networks.

In the present paper, we will introduce a quantum version of the $A_1$
$T$-system and derive its solutions in terms of quantum networks.


\subsubsection{The cluster algebra for the $A_1$ $T$-system}

The formulation of the $T$-systems as sub-cluster algebras was given in
\cite{DFK08}. In the case of $A_1$ the cluster algebra is given as follows.

\begin{defn}
Let $A$ be the cluster algebra of infinite rank generated by the
fundamental seed $(\bx_0,B_0)$, where $\bx_0$ is given by
\eqref{fundamental}, and the exchange matrix $B_0$
has entries
\begin{equation}\label{bzero}
(B_0)_{i,i'}= (-1)^i (\delta_{i',i+1}+\delta_{i',i-1})
\end{equation}
where the indices refer to the first index of the $T_{i,j}$'s.
\end{defn}

Each equation in the $T$-system corresponds to a mutation in the cluster
algebra $A$ (but not vice versa). 
All solutions of the $T$-system are contained in a subset
of the clusters corresponding to $x_{\mathbf j}$ defined in
\eqref{initial}. 
They are obtained from $(\bx_0,B_0)$ via iterated cluster
mutations of the form $\mu_{a}^{\pm}: \bx_\bj\to
\bx_{\bj'}$:
$$ \mu_a^+ :{\epsfxsize 3.cm \epsfbox{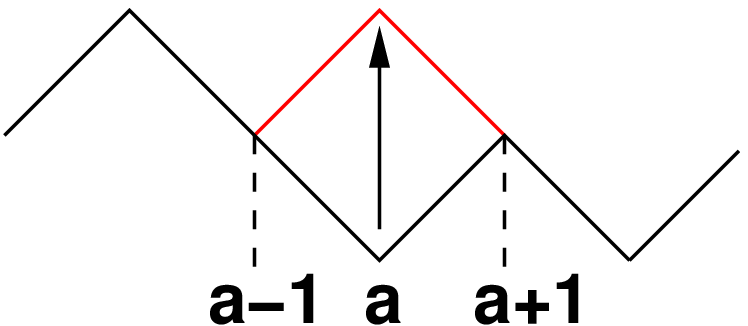}}\qquad\qquad 
\mu_a^{-} :{\epsfxsize 3.cm \epsfbox{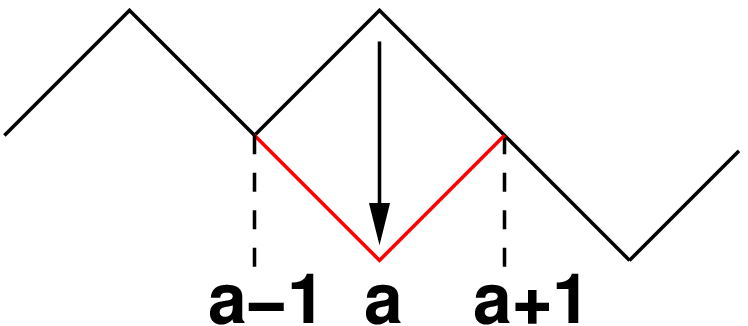}}$$
with $j'_b=j_b\pm 2\delta_{b,a}$, where $\mu_{a}^{\pm}$ leaves
all cluster variables unchanged apart from:
\begin{eqnarray*}
T_{a,j_a\pm2}:=\mu_a^{\pm}(T_{a,j_a})=(T_{a+1,j_a\pm1}T_{a-1,j_a\pm1}+1)/T_{a,j_a}
\end{eqnarray*}
in the case where all three terms on the right hand side are cluster
variables in $\bx_{\bj}$.

\section{The quantum $A_1$ $T$-system}\label{section3}


\subsection{Commutation relations in the initial seed}

In this note, we consider the non-commuting ``quantum" version of the
$A_1$ $T$-system.  Recall that for any cluster algebra of finite rank
$n$, a quantum cluster algebra is obtained by producing a ``compatible
pair" $(B_0,\Lambda)$ of skew-symmetric $n\times n$ integer matrices,
with $B_0\Lambda= d$, where $d$ is a diagonal matrix with positive
integer entries.  In turn, $\Lambda$ encodes the $q$-commutation
relations between the cluster variables of the initial cluster
$\bx_0=(x_i)_{i\in [1,n]}$ via $x_i x_j =q^{\Lambda_{i,j}}x_j x_i$.


\begin{figure}
\centering
\includegraphics[width=10.cm]{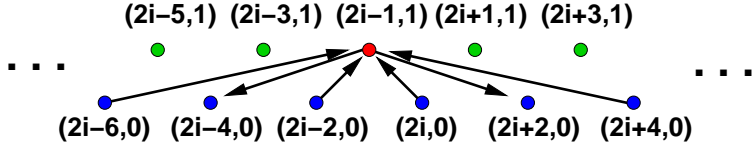}
\caption{\small The commutations between $T_{2i-1,1}$ and the other variables of the 
fundamental cluster $\bx_0$. Vertices $a$ and $b=(2i-1,1)$
are connected by an arrow $a\to b$ iff $T_aT_b =q T_bT_a$ and $b\to a$ iff $T_aT_b =q^{-1} T_bT_a$,
while $T_aT_b=T_bT_a$ otherwise.}
\label{fig:qcomz}
\end{figure}

However, the $A_1$ $T$-system comes from an infinite rank cluster
algebra with $\bx_0=(x_i)_{i\in \Z}$. We adapt the above condition,
based on the quantization of the $A_1$ $Q$-system \cite{DFK10,BZ}, which is a
specialization of the $T$-system.   

\begin{lemma}
Let $\Lambda$ be an infinite, skew-symmetric matrix such that
$$(B_0\Lambda)_{i,j}=(-1)^i(\Lambda_{i+1,j}+\Lambda_{i-1,j})=2\delta_{i,j},
\quad i,j\in\Z, 
$$
and such that $\Lambda_{i+m,i}=\Lambda_{i-m,i},\ (m>0)$.
Then
\begin{equation}\label{initlambda}
\Lambda_{i,j}= {1-(-1)^{i+j}\over 2} \times \left\{ \begin{matrix} (-1)^{i+j-1\over 2} & {\rm if} \ i\geq j\\
(-1)^{i+j+1\over 2} & {\rm if} \ i<j \end{matrix} \right.
\end{equation}
\end{lemma}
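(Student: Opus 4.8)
The plan is to read the defining relation $(B_0\Lambda)_{i,j}=2\delta_{i,j}$ as a linear recursion in the first index of $\Lambda$, and to use the two symmetry hypotheses to pin down the finitely many constants that this recursion leaves free in each column. (Note first that $B_0\Lambda$ is well defined as an infinite product, since each row of $B_0$ has only two nonzero entries, so every matrix element is a finite sum.) I would rewrite the hypothesis $(-1)^i(\Lambda_{i+1,j}+\Lambda_{i-1,j})=2\delta_{i,j}$ in the form $\Lambda_{i+1,j}+\Lambda_{i-1,j}=2(-1)^j\delta_{i,j}$, and observe that for each fixed column index $j$ the right-hand side vanishes except at $i=j$. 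Hence, for $i\neq j$, the column obeys the homogeneous ``skip-two'' recursion $\Lambda_{i+1,j}=-\Lambda_{i-1,j}$. This recursion propagates separately along the two parity classes of the first index, so the entire column is determined by the three central values $\Lambda_{j-1,j}$, $\Lambda_{j,j}$, $\Lambda_{j+1,j}$, subject only to the single inhomogeneous constraint $\Lambda_{j+1,j}+\Lambda_{j-1,j}=2(-1)^j$ read off at $i=j$.

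Next I would remove the remaining freedom using the two hypotheses. Skew-symmetry gives $\Lambda_{j,j}=0$, which makes the whole diagonal-parity chain vanish, i.e. $\Lambda_{i,j}=0$ whenever $i-j$ is even; this reproduces the prefactor $(1-(-1)^{i+j})/2$. The reflection symmetry $\Lambda_{j+m,j}=\Lambda_{j-m,j}$ evaluated at $m=1$ gives $\Lambda_{j+1,j}=\Lambda_{j-1,j}$, and together with $\Lambda_{j+1,j}+\Lambda_{j-1,j}=2(-1)^j$ this forces $\Lambda_{j+1,j}=\Lambda_{j-1,j}=(-1)^j$. Iterating the homogeneous recursion then yields $\Lambda_{j+2k+1,j}=(-1)^{j+k}$, that is $\Lambda_{i,j}=(-1)^{(i+j-1)/2}$ for every $i>j$ with $i-j$ odd, which is the first branch of the asserted formula.

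Finally, for $i<j$ I would avoid re-running the recursion and instead apply skew-symmetry once more, $\Lambda_{i,j}=-\Lambda_{j,i}=-(-1)^{(i+j-1)/2}=(-1)^{(i+j+1)/2}$, giving exactly the second branch; I would also check that the reflection symmetry then holds for all $m>0$, not merely $m=1$, so that the hypotheses are mutually consistent. The only real obstacles are bookkeeping: establishing cleanly that the skip-two recursion decouples the parity classes so that precisely two constants survive per column, and verifying that the two case-branches agree where they overlap. The latter is where a sign slip is most likely, since propagating the recursion directly into the region $i<j$ produces an exponent that differs from $(i+j+1)/2$ by $j-i-1$, an even integer; I would therefore carry the parities of $i+j$ and of the offset $k$ explicitly throughout to confirm the signs coincide.
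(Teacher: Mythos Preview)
Your argument is correct. The paper itself does not give a proof of this lemma; it only follows the statement with the remark that ``the second condition on $\Lambda$ is a choice, a reflection symmetry imposed on the matrix $\Lambda$ which determines the matrix entries completely.'' Your proposal is precisely the verification that remark is pointing to: read $(B_0\Lambda)_{i,j}=2\delta_{i,j}$ as a skip-two recursion in the first index, use skew-symmetry to kill the even-parity chain, use the reflection hypothesis at $m=1$ together with the inhomogeneous relation at $i=j$ to fix $\Lambda_{j\pm1,j}=(-1)^j$, and then iterate. Your consistency checks (that reflection then holds for all $m>0$, and that the skew-symmetric extension to $i<j$ agrees with running the recursion downward) are the right things to verify and are both straightforward, since in each case the two exponents differ by an even integer.
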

The second condition on $\Lambda$ is a choice, a reflection symmetry
imposed on the matrix $\Lambda$ which determines the matrix entries
\eqref{initlambda} completely.

The matrix $\Lambda$ encodes the commutation relations among the
elements of the 
fundamental cluster variable $\bx_0$
\begin{eqnarray}\label{qcomzero}
T_{2i-2k,0}T_{2i+1,1}&=&q^{(-1)^k}\, T_{2i+1,1}T_{2i-2k,0},\nonumber\\
T_{2i+2k,0}T_{2i-1,1}&=&q^{(-1)^k}\, T_{2i-1,1}T_{2i+2k,0}, \quad i\in\Z, k\geq0,
\end{eqnarray}
These commutation relations are
depicted graphically in Figure \ref{fig:qcomz}.

\subsection{The quantum $A_1$ $T$-system}

\begin{figure}
\centering
\includegraphics[width=10.cm]{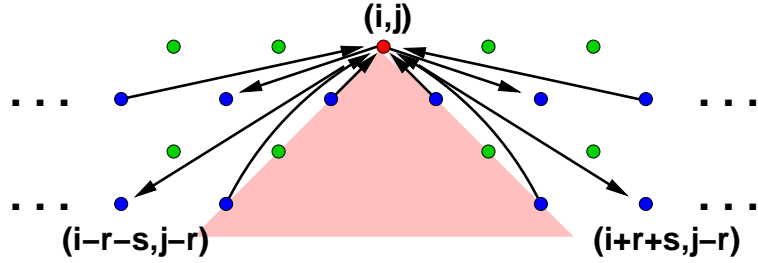}
\caption{\small The q-commutations between $T_{i,j}$ and
  $T_{i+r+s,j-r}$ or $T_{i-r-s,j-r}$ for $r,s\geq 0$ are illustrated
  as follows: vertices $a$ and $b$ are connected by an arrow $a\to b$
  iff $T_aT_b =q T_bT_a$. Note that only vertices $a,b$ with heights
  $j$ of opposite parity give rise to non-trivial commutations. The interior of the
  shaded cone below $(i,j)$ corresponds to the values $(k,\ell)$ such that
  $T_{k,\ell}$ cannot
  belong to the same cluster as $T_{i,j}$.}
\label{fig:qcom}
\end{figure}

%

We define the quantum $A_1$ $T$-system for the variables
$T_{i,j}$ subject to the commutation relations \eqref{qcomzero} to be:
\begin{equation}\label{qtsys}
q\,T_{i,j+1}T_{i,j-1}=T_{i+1,j}T_{i-1,j}+1\qquad (i,j \in \Z)
\end{equation}

As in the commuting system \eqref{clatsys}, this is a ``three-term''
recursion relation in the variable $j$: All the variables $\{T_{i,j}\vert
i,j\in\Z,\  i\equiv j\! \!\mod 2\}$ are determined via these equations in
terms of the initial data $\bx_\bj=(T_{a,j_a})_{a\in\Z}$ with $|j_a-j_{a-1}|=1$.

Mutations $\mu_a^\pm$ are now implemented by using
the relations \eqref{qtsys} in the forward direction $T_{a,j_a}\to
T_{a,j_a+2}$ or backward direction $T_{a,j_a}\to T_{a,j_a-2}$.

Using Equations \eqref{qcomzero} and \eqref{qtsys}, the commutation
relations between cluster variables within the same seed $\bx_{\bj}$ are
determined for any $\bj$.
\begin{lemma}
Within each admissible initial data set of the quantum $A_1$ $T$-system, we have
the commutation relations (see Fig.\ref{fig:qcom}):
\begin{eqnarray}
T_{i-2k-m,j-m}T_{i,j}&=&q^{(-1)^k{1-(-1)^m\over 2}}  \,T_{i,j}T_{i-2k-m,j-m}\nonumber \\
T_{i+2k+m,j-m}T_{i,j}&=&q^{(-1)^k{1-(-1)^m\over 2}}  \,T_{i,j}T_{i+2k+m,j-m}\label{qcomm}
\end{eqnarray}
for all $i,j\in \Z$ and $k,m\in \Z_+$, with $i+j=0$ mod 2.
\end{lemma}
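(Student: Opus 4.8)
The plan is to prove the commutation relations \eqref{qcomm} by induction on $m$, using the $T$-system recursion \eqref{qtsys} to descend in the height variable $j$ while tracking how $q$-commutators propagate. The base case $m=0$ is the statement that $T_{i,j}$ commutes with $T_{i\pm 2k,j}$ for all $k\geq 0$, since $\frac{1-(-1)^0}{2}=0$; this says variables at the same height $j$ commute, which should follow directly from the initial relations \eqref{qcomzero} (for $j\in\{0,1\}$) together with the observation that same-height variables never appear on opposite sides of a single recursion step. The case $m=1$ recovers exactly the generating relations \eqref{qcomzero}, appropriately shifted, and serves as the true starting point of the induction.

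For the inductive step, I would fix $T_{i,j}$ and seek the commutation relation with $T_{i',j-m}$ where $i'=i\pm(2k+m)$. The key idea is to express $T_{i',j-m}$ via the recursion \eqref{qtsys} in the form
\begin{equation*}
T_{i',j-m} = (T_{i'+1,j-m+1}T_{i'-1,j-m+1}+1)\,(qT_{i',j-m+2})^{-1},
\end{equation*}
thereby trading one variable at height $j-m$ for three variables at heights $j-m+1$ and $j-m+2$, all strictly closer to $j$. By the inductive hypothesis I already know how $T_{i,j}$ $q$-commutes with each of $T_{i'\pm1,j-m+1}$ and with $T_{i',j-m+2}$. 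The task is then to verify that the accumulated powers of $q$ on the right-hand side are consistent (so that the additive ``$+1$'' term and the product term pick up the same overall $q$-power, leaving the relation homogeneous), and that this common power equals $(-1)^k\frac{1-(-1)^m}{2}$.

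The main obstacle will be the bookkeeping of the exponents: I must check that the two contributions $T_{i'+1,j-m+1}$ and $T_{i'-1,j-m+1}$, which sit at heights differing from $j$ by $m-1$ and carry exponents of the form $(-1)^{k'}\frac{1-(-1)^{m-1}}{2}$ with possibly different effective $k'$, combine with the exponent coming from $T_{i',j-m+2}$ (at height difference $m-2$) so that the net exponent matches the claimed formula and is \emph{independent of the additive term}. Since $T_{i',j-m}$ must $q$-commute with $T_{i,j}$ as a single variable, the exponents multiplying the monomial $T_{i'+1,j-m+1}T_{i'-1,j-m+1}$ and the constant $1$ (after clearing $(qT_{i',j-m+2})^{-1}$) must agree; verifying this compatibility is precisely where the alternating sign $(-1)^k$ and the factor $\frac{1-(-1)^m}{2}$ must conspire correctly. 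I expect this to reduce to checking that $(-1)^{k}\frac{1-(-1)^{m-1}}{2}$ summed appropriately over the two height-$(m-1)$ neighbors, minus the height-$(m-2)$ contribution, telescopes to $(-1)^k\frac{1-(-1)^m}{2}$; this is a finite parity computation, tedious but mechanical. The reflection symmetry $\Lambda_{i+m,i}=\Lambda_{i-m,i}$ imposed in the first Lemma guarantees the symmetric treatment of the $i'=i+(2k+m)$ and $i'=i-(2k+m)$ cases, so it suffices to treat one of them in detail.
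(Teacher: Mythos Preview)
Your inductive step for $m\ge 2$ is essentially correct, and in fact uses the same core observation as the paper's argument: the two neighbours $T_{i'+1,j-m+1}$ and $T_{i'-1,j-m+1}$ contribute exponents $(-1)^{k+1}\frac{1-(-1)^{m-1}}{2}$ and $(-1)^{k}\frac{1-(-1)^{m-1}}{2}$, which cancel, so the right-hand side of the recursion commutes with $T_{i,j}$, and the remaining factor $T_{i',j-m+2}$ then forces the claimed exponent for $T_{i',j-m}$. The paper runs the same cancellation, only organised differently: it fixes a cluster $\bx_\bm$, mutates one variable $T_{a,j_a}\mapsto T_{a,j_a+2}$, and checks that the \emph{new} variable $q$-commutes correctly with every other variable in the cluster.

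The genuine gap is your base case. You claim that $m=0$ (same-height commutation) and $m=1$ follow ``directly from the initial relations \eqref{qcomzero}, appropriately shifted.'' But \eqref{qcomzero} only establishes these relations for heights $j\in\{0,1\}$; for general $j$ the variables $T_{i,j}$ are themselves built from iterated applications of \eqref{qtsys}, and the assertion that, e.g., $T_{i,2}$ commutes with $T_{i+2,2}$ is not a consequence of any single recursion step. Your remark that ``same-height variables never appear on opposite sides of a single recursion step'' is suggestive but not a proof. To obtain the $m=0,1$ relations at an arbitrary height $j$ you must propagate them from $\bx_0$ through a chain of mutations---and that is precisely the induction the paper carries out. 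In the paper's scheme the base case is the single cluster $\bx_0$, where everything is given by \eqref{qcomzero}, and the inductive step is one mutation; in your scheme you would need an additional outer induction on $j$ (or on mutation distance from $\bx_0$) to supply the $m\in\{0,1\}$ starting point at every height, at which point the argument collapses into the paper's.
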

\begin{proof} By induction under mutation.  The relations
  \eqref{qcomm} reduce to \eqref{qcomzero} for $\bx_0$ under the
  specialization $j=m=1$.  

Suppose the variables in the admissible data set
  $\bx_\bm$ satisfy \eqref{qcomm} and that $m_{a-1}=m_{a+1}=m_a+1=m$
  for some fixed value of $a\in\Z$.
Let us apply \eqref{qtsys} to perform a
  mutation $\mu_a^+:\bm\to \bm'$ with $m'_b=m_b+2\delta_{b,a}$. We
  must check that all the commutation relations \eqref{qcomm} between
  any $(T_{b,m_b})$ for $b\neq a$ and $T_{a,m_a'}$ hold. Writing
  $i=a$, $j=m_a+1$, the new cluster variable is $T_{i,j+1}$, given by
  $qT_{i,j+1}T_{i,j-1}=T_{i+1,j}T_{i-1,j}+1$.  Let $k=b$, $\ell=m_b$
  for some $b\neq a$.  Without loss of generality, let's assume that
  $k=i+r+s$ and $\ell=j-r$, $r\geq 0,s\geq 1$.  Then by the
  commutation relations \eqref{qcomm}, we have
$$T_{i+r+s,j-r}(T_{i+1,j}T_{i-1,j})=(T_{i+1,j}T_{i-1,j})T_{i+r+s,j-r}$$
Henceforth, $T_{i+r+s,j-r}$ must commute with $T_{i,j+1}T_{i,j-1}$, and we obtain
$$T_{i+r+s,j-r}T_{i,j+1}=T_{i,j+1}T_{i+r,j-r-s} q^{-(-1)^{s}{1-(-1)^{r-1}\over 2}}$$
in agreement with \eqref{qcomm}. The Lemma follows.
\end{proof}
\begin{remark}
Note that $T_{i,j}$ and $T_{i',j'}$ in the same cluster $\bx_\bj$
commute if $j\equiv j'\mod 2$. 
From the definition of admissible data sets, we see that 
$T_{i,j}$and  $T_{i',j'}$ do not belong to the same cluster if 
$|i-i'|<|j-j'|$.
\end{remark}

\begin{remark}\label{barinv}
  Equation \eqref{qtsys} satisfies a ``bar invariance" property in the
  following sense.  Let $*$ denote an algebra antiautomorphism of
  $\cA$, where $q^*=q^{-1}$ and $T_{i,i\!\!\mod 2}^*=q T_{i,i\!\!\mod 2}$ for
  $i\in\Z$. Then $T_{i,j}^*=q T_{i,j}$ for all $i,j\in\Z$, $i+j=0$ mod
  2.  This result is obtained by conjugating the quantum $T$-system
  \eqref{qtsys} by $T_{i,j-1}$ and using the commutation relations
  \eqref{qcomm}.
\end{remark}

Using these commutation relations, we see that the $T$-system relation
\eqref{qtsys} is exactly of the form of the quantum cluster mutation
\eqref{qmutation}, upon the renormalization of variables $X_{i,j} =
q^{1/2} T_{i,j}$. We note that the subset of mutations \eqref{qtsys}
which we consider in the infinite rank cluster algebra makes sense,
because the product on the right hand side of a mutation has only a
finite number of factors (at most three).

A finite-rank quantum cluster algebra has a Laurent property
\cite{BZ}, that is, cluster variables are Laurent polynomials as
functions of any cluster seed,
as in the case of a commutative cluster algebra. In the quantum case,
the coefficients are in $\Z[q,q^{-1}]$. It is not completely
obvious that this carries over to the current case, which has
infinite rank.  However we will show that the solutions of the quantum
$A_1$ $T$-system have the Laurent property, by constructing
explicit formulas for the solutions $T_{i,j}$ of the $A_1$ quantum
$T$-system in terms of any initial data $x_\bj$. The coefficients are
in $\Z_+[q,q^{-1}]$, which is the analog of the positivity property of
cluster algebras \cite{FZI}.

\section{Quantum networks and the general solution}\label{section5}

Here, we generalize the results of \cite{DF} for the network solution
of the $T$-system in terms of arbitrary admissible initial data to the
non-commutative case. The solution of \eqref{qtsys} in terms of any given
admissible data $\bx_j$ is expressed as a quantum network partition function.

\subsection{$U$ and $V$ matrices}
Let $a,b$ be elements of $\mathcal A$. Define the 
matrices
\begin{equation}\label{dumat}
 U(a,b)=\begin{pmatrix} 1 & 0\\ q^{-1}b^{-1} & a b^{-1}\end{pmatrix}, \qquad 
V(a,b)=\begin{pmatrix}a b^{-1} & b^{-1}\\ 0 & 1\end{pmatrix}.
\end{equation}

These are interpreted as an elementary transfer matrix or ``chip",
along a lattice with two rows, going from left to right:
$U_{i,j}(a,b)$ or $V_{i,j}(a,b)$ is the weight of the edge connecting
the dot (entry connector) in row $i$ on the left to the dot (exit connector) in row $j$ on the right in
those elementary chips:
\begin{eqnarray}\label{udnet}
&&\epsfxsize=8cm \epsfbox{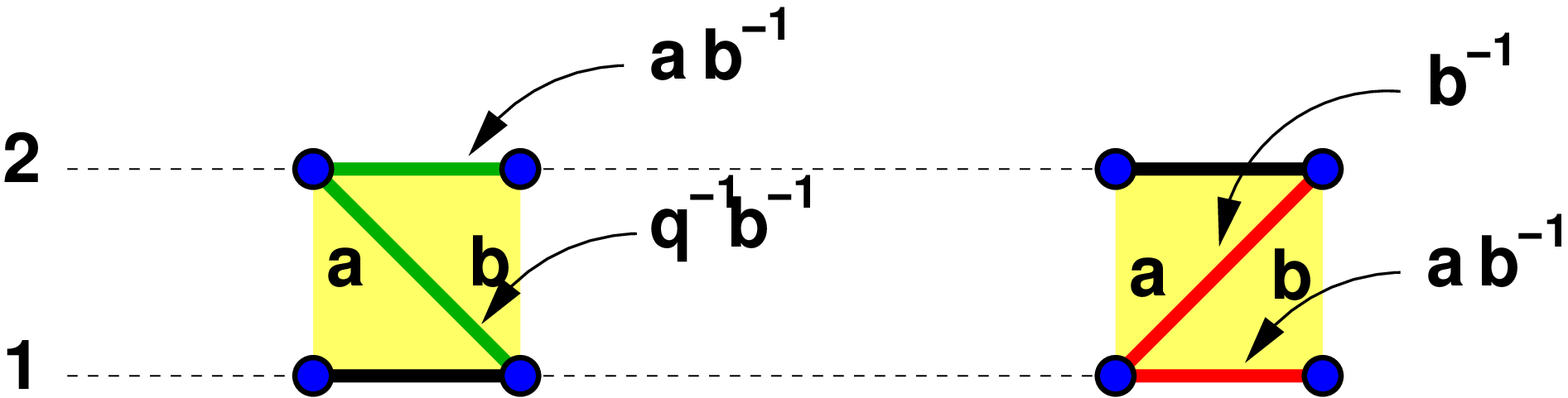} \nonumber \\
&& \qquad  \qquad  \   U(a,b) \qquad  \qquad  \qquad \qquad
V(a,b)
\end{eqnarray}
Note that we have represented the variables $a,b$ as attached
to the {\it faces} of the chips, separated by their edges.

A {\it quantum network} is obtained by the concatenation of such chips,
forming a chain where the exit connectors $1,2$ of each chip in the
chain are identified with the entry connectors of the next chip in the
chain, while face labels are well-defined. The latter condition
imposes that $U$ and $V$ arguments themselves form a chain
$a_1,a_2,...$, for instance:
\begin{equation}\label{duex}
W=V(a_1,a_2)V(a_2,a_3)U(a_3,a_4)V(a_4,a_5)U(a_5,a_6)
\end{equation}
corresponds to the network:
$$
\epsfxsize=7cm \epsfbox{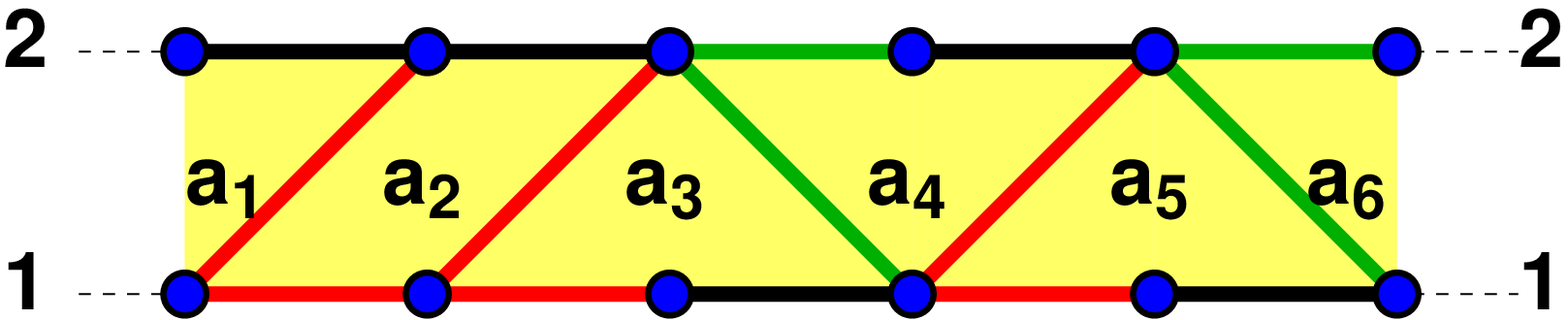}
$$
The partition function of a quantum network with matrix of weights $W$
with entry connector $i$ and exit connector $j$ is $W_{i,j}$.  It
the sum over paths from entry $i$ to exit $j$ of the product on
the edges, taken in the order they are traversed.


\begin{lemma}\label{prealem}
  Let $a,b,c\in \mathcal A$ be invertible elements with relations
  $ba=q ab$ , $bc=qcb$ and $ac=ca$ in $\mathcal A$, then
\begin{eqnarray} 
V(a,b)U(b,c)&=&U(a,b')V(b',c),
\end{eqnarray}
where $b'$ is defined by the relation $q\, b' b=ac+1$. This definition
implies that $cx=q b' c$ and $a x=q b' a$.
\end{lemma}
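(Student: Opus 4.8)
The plan is to verify the matrix identity $V(a,b)U(b,c)=U(a,b')V(b',c)$ by direct multiplication of the explicit $2\times 2$ matrices given in \eqref{dumat}, being careful to keep all products in the correct non-commutative order, and then to interpret the resulting four entrywise equations as constraints that are satisfied precisely when $b'$ is defined by $q\,b'b=ac+1$ together with the stated commutation relations. First I would compute the left-hand side. Using
\[
V(a,b)=\begin{pmatrix} a b^{-1} & b^{-1}\\ 0 & 1\end{pmatrix},
\qquad
U(b,c)=\begin{pmatrix} 1 & 0\\ q^{-1}c^{-1} & b c^{-1}\end{pmatrix},
\]
I would multiply these two matrices, obtaining entries such as $(1,1)=ab^{-1}+q^{-1}b^{-1}c^{-1}$ and $(1,2)=b^{-1}\cdot bc^{-1}=c^{-1}$, and similarly for the second row. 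The right-hand side
\[
U(a,b')=\begin{pmatrix} 1 & 0\\ q^{-1}(b')^{-1} & a(b')^{-1}\end{pmatrix},
\qquad
V(b',c)=\begin{pmatrix} b' c^{-1} & c^{-1}\\ 0 & 1\end{pmatrix}
\]
multiplies to give its own four entries, and the claim reduces to matching them.

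The key step will be the $(1,1)$ entry: on the left it reads $ab^{-1}+q^{-1}b^{-1}c^{-1}$ and on the right it reads $b'c^{-1}$, so I must check that
\[
a b^{-1} + q^{-1} b^{-1} c^{-1} = b' c^{-1}.
\]
Here is where the defining relation enters. I would right-multiply by $c$ and use $ac=ca$ to rewrite $ab^{-1}c$; the plan is to push $b^{-1}$ past $c$ using $bc=qcb$ (hence $b^{-1}c=q\,cb^{-1}$ and $c^{-1}b^{-1}=q\,b^{-1}c^{-1}$, applied as needed), reducing the left side to something of the form $(\text{scalar})\cdot b^{-1}(ac+1)$. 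Invoking $q\,b'b=ac+1$, i.e.\ $ac+1=q\,b'b$, this should collapse exactly to $b'$, confirming the identity. The other entries are easier: the $(1,2)$ entries are both $c^{-1}$ on the nose, and the second-row entries should match after analogous manipulations, using that the $(2,2)$ entry on the left is $bc^{-1}$ while on the right it is $a(b')^{-1}c^{-1}+1$, which forces the auxiliary relations $cx=q\,b'c$ and $ax=q\,b'a$ stated in the lemma.

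The main obstacle, and the part requiring genuine care rather than routine algebra, is bookkeeping the non-commutativity consistently: every time I move $b^{-1}$ or $(b')^{-1}$ across $a$ or $c$ I must apply the correct relation from $ba=qab$, $bc=qcb$, $ac=ca$ (and their inverted forms), and I must also confirm that $b'$ defined by $q\,b'b=ac+1$ actually satisfies the same commutation relations with $a$ and $c$ that $b$ does, so that the matrices $U(a,b')$ and $V(b',c)$ are even well-formed in the same algebra. I expect the auxiliary relations $cx=q\,b'c$ and $ax=q\,b'a$ (with $x$ presumably an abbreviation tied to $b'$) to follow by conjugating the defining relation $q\,b'b=ac+1$ and using the commutators; establishing these is the natural by-product that both closes the consistency check and supplies exactly what the second-row entry comparison needs. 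Once the sign and power-of-$q$ bookkeeping is done correctly, all four entrywise identities hold simultaneously, and the lemma follows.
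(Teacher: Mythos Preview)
Your approach is exactly the paper's: compute both $2\times 2$ products explicitly and compare entries, reading off $q\,b'b=ac+1$ from the $(1,1)$ entry and deducing the commutation relations $ab'=qb'a$, $cb'=qb'c$ (the ``$x$'' in the statement is indeed $b'$) from the $(2,2)$ entry. One slip to fix when you execute it: the right-hand $(2,2)$ entry is $q^{-1}(b')^{-1}c^{-1}+a(b')^{-1}$, not $a(b')^{-1}c^{-1}+1$; with that correction everything goes through exactly as you describe.
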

\begin{proof}
Direct calculation:
$$V(a,b)U(b,c)=
\begin{pmatrix} (a+c^{-1})b^{-1} & c^{-1}\\ q^{-1}c^{-1} & b
  c^{-1}\end{pmatrix}$$ and
$$U(a,b')V(b',c)
=\begin{pmatrix} {b'}c^{-1}& c^{-1}\\ q^{-1}c^{-1} & q^{-1}{b'}^{-1}
  c^{-1}+a {b'}^{-1}\end{pmatrix}$$ 
Setting the two expressions equal, we find $q {b'} b=1+a c$ from the
$(1,1)$ element, $q {b'} b= 1+q {b'} a {b'}^{-1} c$
from the $(2,2)$ one.  Since $a$
commutes with $c$, it commutes with $1+ac$, and the first identity
implies that $a{b'}b={b'}ba=q {b'} ab$, i.e.
$a{b'}=q{b'}a$, and $c{b'}=q {b'} c$. The Lemma follows.
\end{proof}

Let $\bx_{\bj}$ be admissible data. Then $j_i=j_{i+1}\pm1$. We
associate $V^{(i)}(\bx_\bj) = V(T_{i,j_i},T_{i+1,j_{i+1}})$ if $j_{i+1}-j_i=-1$ and
$U^{(i)}(\bx_\bj) = U(T_{i,j_i},T_{i+1,j_{i+1}})$ if $j_{i+1}-j_i=1$. Thus for the
boundary path $(i,j_i)_{i\in\Z}$ on the lattice, $V$ is associated
with a down step and $U$ is associated with an up step.

Therefore Lemma \ref{prealem} has a graphical representation in terms
of a local mutation of admissible data,
$$
\raise -.5truecm \hbox{${\epsfxsize 2.cm \epsfbox{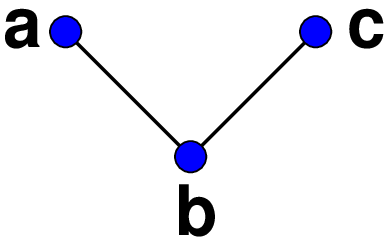}}$}=
\raise -.5truecm \hbox{${\epsfxsize 2.cm \epsfbox{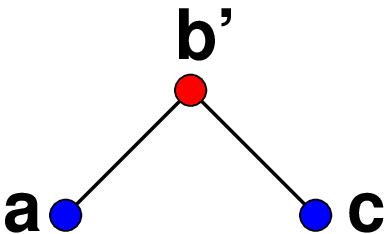}}$} \nonumber \\
$$
In other words, Lemma \ref{prealem} is an implementation of a
mutation of an admissible boundary, using the $A_1$ $T$-system. That
is, we have the relation
\begin{equation}\label{mutmat}
V^{(i-1)}(\bx_{\bj})U^{(i)}(\bx_\bj) = U^{(i-1)}(\bx_{\bj'})V^{(i)}(\bx_{\bj'})
\end{equation}
where $\bx_{\bj'}=\mu_i^+(\bx_\bj)$.


\subsection{Main Theorem}

Let $(i,j)\in\Z^2$ ($i+j=0 \mod 2$) above a fixed
admissible boundary $\bj$, that is, with $j\geq j_i$.

\begin{figure}
\centering
\includegraphics[width=12.cm]{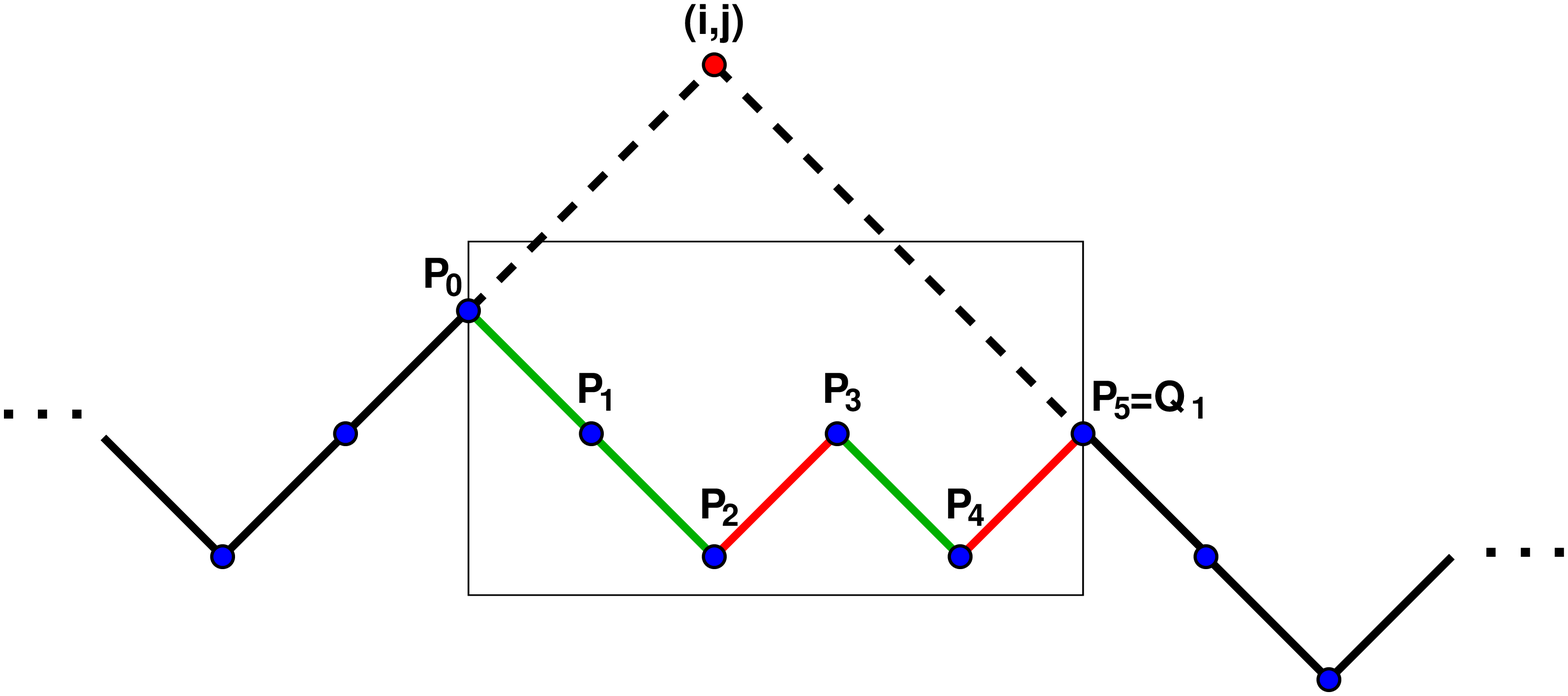}
\caption{\small The projection of a point $(i,j)$ onto a given boundary $s_\bm$.}
\label{fig:proj}
\end{figure}

\begin{defn}
  The projection of the point $(i,j)$ onto the boundary $\bj$ is the
  set of points
  $(i_0,j_{i_0}),(i_0+1,j_{i_0+1}),\ldots,(i_1,j_{i_1})$, the portion
  of boundary between the lines $(i+k,j+k)_{k\in\Z}$ and
  $(i+k,j-k)_{k\in\Z}$, with endpoints $P_0=(i_0,j_{i_0})$ and
  $Q_1=(i_1,j_{i_1})$ such that $j_{i_0}-i_0=j-i$ with
  $i_0$ maximal and $j_{i_1}+i_1=j+i$ with $i_1$ minimal.
\end{defn}

Figure \ref{fig:proj} is an example of such a projection. It is a path
along the boundary points $(i,j_i)$ from
the vertex $P_0$ to
the vertex $Q_1$ formed by a succession of down (SE) steps
$d=(1,-1)$ and up (NE) steps $u=(1,1)$. By definition such a path,
if non-empty, starts with a down step and ends up with an up step.

To any path $p=(P_0,P_1,P_2,...,P_n=Q_1)$ 
made of steps $S_k= P_k-P_{k-1}\in \{ d,u\}$, $k=1,2,...,n$,
we associate a matrix product as follows.
We define $M_k(d,p)=V(T_{P_{k-1}},T_{P_k})=V^{(i_{k-1})}(\bx_\bj)$
and $M_k(u,p)=U(T_{P_{k-1}},T_{P_k})=U^{(i_{k-1})}(\bx_\bj)$, with the matrices
$V$ and $U$ as in \eqref{dumat} and 
where for any point of the form $P=(x,y)$ we denote by $T_P:=T_{x,y}$.
Define
\begin{equation}\label{matprod}
M(p)=M_1(S_1,p)M_2(S_2,p) \cdots M_n(S_n,p).
\end{equation}
This product is the weight matrix of the network made up of a concatenation
of the basic network chips of the form \eqref{udnet} determined by $p$.
Let $N(p)$ be the corresponding quantum network.

\begin{example}
  The quantum network $N(p)$ in Equation \eqref{duex} with weight matrix
  $M(p)=W$ corresponds to the path
$$ p= \quad \raise -1.truecm \hbox{${\epsfxsize=5cm \epsfbox{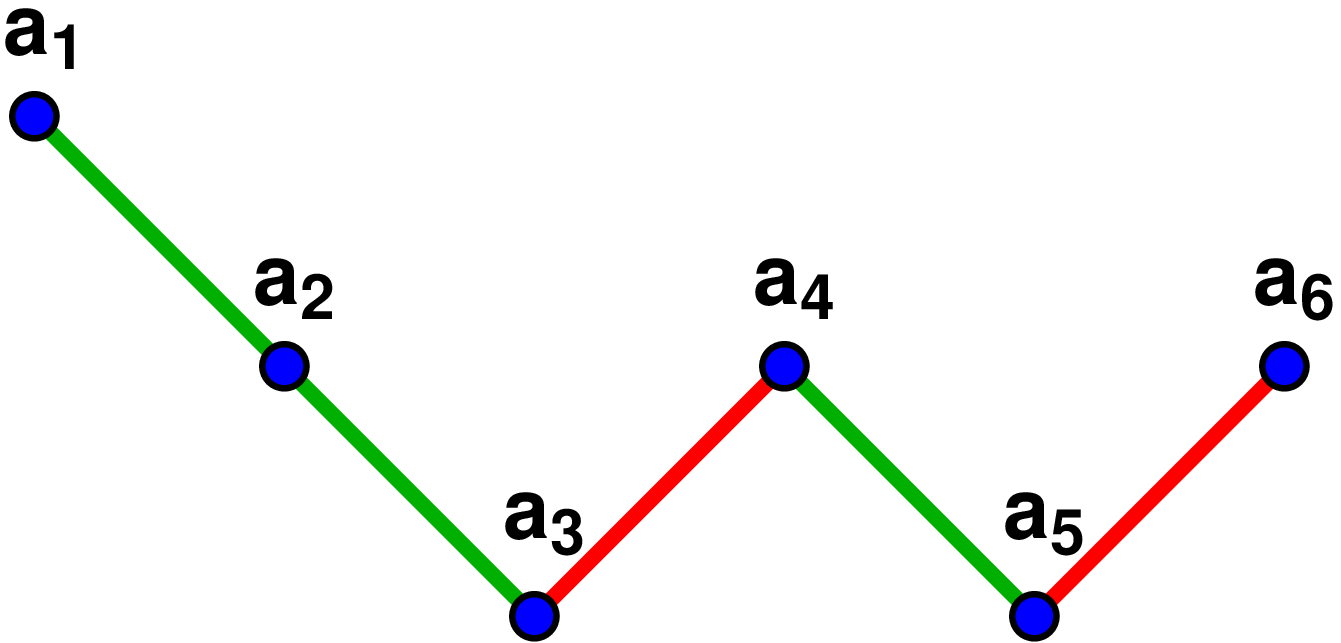}}$} $$
made of a succession of steps $ddudu$, and with a set of vertices of
the form $P_{i-1}=(x_i,y_i)\in \Z^2$ with $x_i+y_i=0$ mod 2,
$i=1,2,...,6$, with $a_i=T_{x_i,y_i}$.
\end{example}

 The main theorem of this section is the following:

\begin{thm}\label{solq}
Let $p$ be the projection of $(i,j)$ onto the boundary $\bj$, with endpoints
$(i_0,j_0)\equiv (i_0,j_{i_0})$ and $(i_1,j_1)\equiv (i_1,j_{i_1})$.
As a function of the admissible data set $\bx_{\bj}$,
\begin{equation}\label{qsol}
T_{i,j}=M(p)_{1,1} T_{i_1,j_1}
\end{equation}
\end{thm}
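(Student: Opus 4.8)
The plan is to prove \eqref{qsol} by induction on the number $n$ of steps of the projection path $p$, using the local mutation identity \eqref{mutmat} to raise the admissible boundary toward $(i,j)$ while keeping the quantity $M(p)_{1,1}T_{i_1,j_1}$ invariant. The base case is the empty path $n=0$, which occurs exactly when the boundary already passes through $(i,j)$ as a local peak: then $P_0=Q_1=(i,j)$, the matrix product $M(p)$ is the identity, and the right-hand side of \eqref{qsol} collapses to $1\cdot T_{i,j}=T_{i,j}$. Since the boundary mutations are precisely the $T$-system relations \eqref{qtsys}, the variable sitting at the peak $(i,j)$ of the fully-raised boundary is indeed $T_{i,j}$, which is what makes the base case meaningful.

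For the inductive step I would exhibit, for any projection of length $n\geq 2$, a finite sequence of boundary mutations that strictly decreases $n$ and preserves $M(p)_{1,1}T_{i_1,j_1}$. There are two kinds of moves, both instances of \eqref{mutmat}. An \emph{interior} move raises a valley (a $du$ subpath) whose vertex lies strictly below the tent bounded by the lines $(i+k,j+k)_{k\in\Z}$ and $(i+k,j-k)_{k\in\Z}$; by \eqref{mutmat} it leaves both endpoints $P_0,Q_1$ and the entire product $M(p)$ unchanged. A \emph{boundary} move raises the valley adjacent to $P_0$. Writing the initial down-run of $p$ as $S_1=\cdots=S_k=d$, $S_{k+1}=u$, I would first apply $k-1$ interior moves, each shifting this leftmost valley one step to the left (raising $P_\ell$ turns $S_\ell S_{\ell+1}=du$ into $ud$, so the leftmost valley moves to $P_{\ell-1}$, while $S_1$ is never touched), until the valley sits at $P_1$, one down-step below $P_0$. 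Raising $P_1$ then lands its vertex exactly on the line $(i+k,j+k)$ through $(i,j)$, so the new boundary meets that line at a point of larger first coordinate, the new projection endpoint $P_0'$ advances there, and $n$ strictly decreases.

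The key computation is that this last boundary move preserves $M(p)_{1,1}T_{i_1,j_1}$. By \eqref{mutmat}, the first two chips $V^{(i_0)}(\bx_\bj)U^{(i_0+1)}(\bx_\bj)$ of $M(p)$ become $U^{(i_0)}(\bx_{\bj'})V^{(i_0+1)}(\bx_{\bj'})$, and the projection onto the mutated boundary $\bj'$ has matrix $M(p')=V^{(i_0+1)}(\bx_{\bj'})\cdots$, so that $M(p)=U^{(i_0)}(\bx_{\bj'})M(p')$. Since the first row of $U(a,b)$ in \eqref{dumat} is $(1,0)$, this gives $M(p)_{1,1}=M(p')_{1,1}$; as the right endpoint $Q_1$ is untouched by a left-end move, $T_{i_1,j_1}$ is unchanged, and the invariant is preserved. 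Induction on $n$ then yields $M(p)_{1,1}T_{i_1,j_1}=T_{i,j}$.

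I expect the main obstacle to be the geometric bookkeeping rather than the algebra, the latter being already supplied by Lemma \ref{prealem}. One must verify that after each mutation the resulting path really is the projection of $(i,j)$ onto the new admissible boundary: that interior valleys stay strictly under the tent (so they never prematurely alter $P_0$ or $Q_1$), while the leftmost valley, once pushed to $P_1$, lands exactly on the tent line so that $P_0$ advances by the correct amount. One should also check that the unique $n=2$ projection is forced to be the single valley directly below $(i,j)$, whose boundary move sends $P_0'$ all the way to the apex and collapses the path straight to the empty case, confirming that the induction terminates at $n=0$.
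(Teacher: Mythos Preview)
Your approach is essentially the paper's, run in the opposite direction: you raise the boundary toward $(i,j)$ and induct on the length of the projection, while the paper starts from a boundary through $(i,j)$ and shows invariance of the formula under an arbitrary single mutation $\mu_a^\pm$, splitting into five cases (a)--(e) according to whether $a$ sits at or next to the left endpoint, at or next to the right endpoint, or strictly in the interior. Both arguments rest on Lemma~\ref{prealem} (your \eqref{mutmat}) together with the triangular shape of $U$ and $V$.

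There is, however, a genuine gap in your case analysis. Your claim that an ``interior'' raise at $P_\ell$ with $\ell\geq 2$ never disturbs $Q_1$ fails exactly when the projection has the shape $d^{\,n-1}u$, i.e.\ when $P_0=(i-1,j-1)$. In that situation your very first raise, at $P_{n-1}$, sends the raised vertex onto the right tent line, so $Q_1$ jumps to $P_{n-1}'$ and the new projection already has length $n-1$; your assertion that interior moves leave ``both endpoints $P_0,Q_1$ and the entire product $M(p)$ unchanged'' is false here. The invariant $M(p)_{1,1}T_{i_1,j_1}$ \emph{is} still preserved, but verifying this requires the dual observation you never invoke, $\big(V(a,b)\big)_{m,1}=\delta_{m,1}\,a\,b^{-1}$, which is precisely what the paper uses in its cases (c) and (d). Your flagged $n=2$ case is the smallest instance of the same phenomenon: the single raise moves \emph{both} endpoints to the apex at once, so ``$Q_1$ is untouched by a left-end move'' is simply false there, and your left-end computation alone does not close the induction. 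Once you add the mirror argument for the right end --- stripping a trailing $V$ using its first column rather than a leading $U$ using its first row --- your proof goes through and matches the paper's.
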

\begin{proof}
  This is proved by induction under mutations of initial data. Let
  $\bx_\bj$ be some initial data whose boundary contains the
  point $(i,j)$. For such a case, we have
  $(i,j)=(i_0,j_{0})=(i_1,j_{1})$, $T_{i,j}=1\times
  T_{i,j}=({\mathbf I})_{1,1} T_{i_1,j_1}$, and \eqref{qsol} is
  trivially satisfied.

  Assume \eqref{qsol} holds for some boundary $\bx_\bj$, let us show it
  also holds for the boundary $\bx_{\bj'}$ with $\bj'=\mu_a^\pm(\bj)$,
  that is, $j_a$ is changed for one value of $a\in\Z$, and all other
  values of $j_i$ remain unchanged.

  If $a>i_1$ or $a<i_0$, then the mutation does not
  affect the formula \eqref{qsol}, as the boundary $\bj'$
  is modified outside of the projection of $(i,j)$ onto it, whereas
  $M(p)$ and hence $T_{i,j}$ only depends on the boundary values within
  the projection.

\begin{figure}
\centering
\includegraphics[width=16.cm]{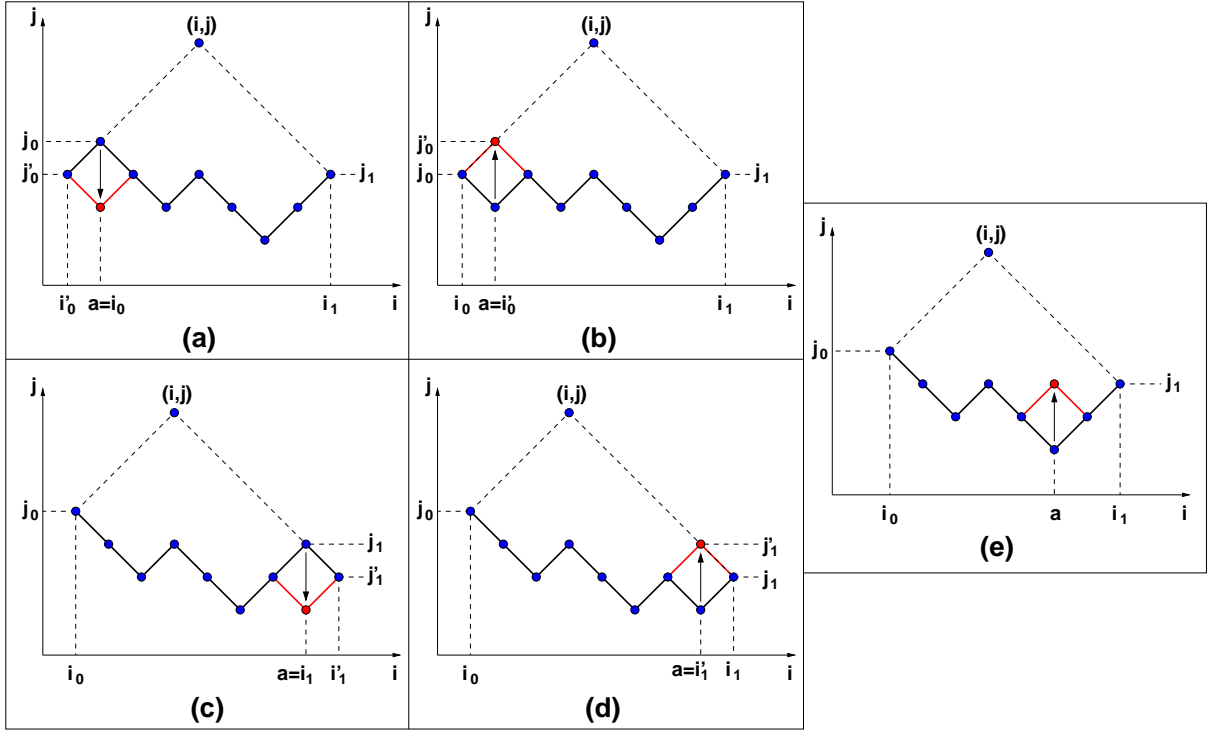}
\caption{\small The five cases to be considered in the proof of Theorem \ref{solq}:
(a) $a=i_0$ (b) $a=i_0+1$ (c) $a=i_1$ (d) $a=i_1-1$ (e) $i_0+1<a<i_1-1$. For each case,
we indicate the mutation by an arrow. The projection of $(i,j)$ onto the boundary is
modified by the mutation only in the first four cases.}
\label{fig:fivecases}
\end{figure}

If $i_0\leq a \leq i_1$, five situations may occur, as sketched in
Fig.\ref{fig:fivecases} (a-e).  Let $p$ the projection of
$(i,j)$ onto $\bj$, and $p'$ the projection of $(i,j)$ onto $\bj'$, and $j_0'\equiv j_{i_0'}$, $j_1'\equiv j_{i_1'}$.
\begin{description}
\item[(a)] If $a=i_0$ and $\mu=\mu_{a}^-$, then
  $(i_0',j_{0}')=(i_0-1,j_{0}-1)$. The first step
  of $p$ is must be $d$ for this mutation to be one of the $T$-system
  equations.  Separating out the contribution of this first step of $p$,
  we write
  $M(p)=V^{(i_0)}(\bj){\tilde M}(p)$.  Using the fact that
  $(U)_{1,j}=\delta_{j,1}$:
\begin{eqnarray*}M(p)_{1,1}&=&(U^{(i_0-1)}(\bx_\bj) M(p))_{1,1}\\
&=&(U^{(i_0-1)}(\bx_\bj)V^{(i_0)}(\bx_\bj){\tilde M}(p))_{1,1}\\
&=&(V^{(i_0-1)}(\bx_{\bj'})U^{(i_0)}(\bx_{\bj'}){\tilde M}(p))_{1,1}=M(p')_{1,1}
\end{eqnarray*}
where in the last line, we applied Eq. \eqref{mutmat}.
Equation \eqref{qsol} follows, as $T_{i_1,j_{1}}=T_{i'_1,j'_{1}}$.

\item[(b)] If $a=i_0+1$ and the first two steps of $p$
  are $d$, $u$, then $\mu=\mu_{a}^+$, $(i_0',j_{0}')=(i_0+1,j_{0}+1)$.
  Therefore,
\begin{eqnarray*}
M(p)_{1,1}&=&(V^{(i_0)}(\bx_\bj)U^{(i_0+1)}(\bx_\bj){\tilde M}(p))_{1,1}\\
&=&(U^{(i_0)}(\bx_{\bj'})V^{(i_0+1)}(\bx_{\bj'}){\tilde M}(p))_{1,1}=M(p')_{1,1}
\end{eqnarray*}
by application of Eq. \eqref{mutmat}. Equation \eqref{qsol}
follows, as
$T_{i_1,j_{1}}$ is unchanged by the mutation.

\item[(c)] If $a=i_1$ and $\mu=\mu_{a}^-$, then
  $(i_1',j_1')=(i_1+1,j_1-1)$. 
The last step of $p$ is $u$, so 
$M(p)={\tilde M}(p) U^{(i_1-1)}(\bx_\bj)$. 
Since $(V(a,b))_{i,1}=\delta_{i,1}a b^{-1}$, 
\begin{eqnarray*}
M(p)_{1,1} T_{i_1,j_1}&=&
\big( {\tilde M}(p) U^{(i_1-1)}(\bx_\bj)V^{(i_1)}(\bx_{\bj})\big)_{1,1} 
T_{i_1+1,j_1-1}\\
&=&
\big( {\tilde M}(p) V^{(i_1-1)}(\bx_{\bj'}) U^{(i_1)}(\bx_{\bj'})\big)_{1,1} 
T_{i_1+1,j_1-1}\\
&=&M(p')_{1,1}T_{i_1',j'_1}.
\end{eqnarray*}

\item[(d)] If $a=i_1-1$ and $\mu=\mu_{a}^+$,  then
  $(i_1',j_1')=(i_1-1,j_1+1)$ and the last two steps of $p$ are $d$, $u$.
Therefore,
\begin{eqnarray*}
M(p)_{1,1}T_{i_1,j_1}&=&({\tilde M}(p)V^{(i_1-2)}(\bx_\bj) 
U^{(i_1-1)}(\bx_{\bj}))_{1,1}T_{i_1,j_1}\\
&=&({\tilde M}(p)U^{(i_1-2)}(\bx_{\bj'})V^{(i_1-1)}(\bx_{\bj'}))_{1,1}T_{i_1,j_1}\\
&=&({\tilde M}(p)U^{(i_1-2)}(\bx_{\bj'}))_{1,1}T_{i_1-1,j_1+1}=M(p')_{1,1}T_{i_1',j_1'}
\end{eqnarray*}
again using $(V(a,b))_{i,1}=\delta_{i,1}a b^{-1}$. 

\item[(e)] If $i_0+1<a<i_1-1$,  the endpoints of the projection onto
  the boundary do not
  change and the mutation $\mu_a^\pm$ amounts to a change of ordering
  of one pair of factors within the product $M(p)$ of the form
  \eqref{mutmat}, which corresponds to writing it as $M(p')$,
  $p'=\mu_a^\pm(p)$ and \eqref{qsol} follows.
\end{description}
\end{proof}

Theorem \ref{solq} may be rephrased in the language of a quantum
network partition function:
\begin{cor}
  The quantity $T_{i,j}T_{i_1,j_1}^{-1}$ is the partition function of
  the quantum network $N(p)$, with weight matrix $M(p)$, with entry
  and exit connector $1$, where $p$ is the projection of $(i,j)$ onto
  the boundary.
\end{cor}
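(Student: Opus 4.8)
The plan is to read the corollary off directly from Theorem \ref{solq}, since it is simply a reformulation of \eqref{qsol} in the language of network partition functions. Starting from $T_{i,j}=M(p)_{1,1}\,T_{i_1,j_1}$, I would right-multiply by the inverse $T_{i_1,j_1}^{-1}$, which exists because every cluster variable is an invertible element of the skew field $\mathcal{A}$. Since the scalar $T_{i_1,j_1}$ sits on the right of $M(p)_{1,1}$ in \eqref{qsol}, this right-multiplication is the correct operation in the noncommutative setting and yields the unambiguous identity $T_{i,j}\,T_{i_1,j_1}^{-1}=M(p)_{1,1}$.

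It then remains to recognize the matrix entry $M(p)_{1,1}$ as the partition function of $N(p)$. Here I would invoke the definition of the quantum network partition function: for a weight matrix $W$ the partition function with entry connector $i$ and exit connector $j$ is the entry $W_{i,j}$, equal to the sum over all directed paths from $i$ to $j$ of the product of traversed edge weights, taken in the order in which they are traversed. Expanding the ordered matrix product $M(p)=M_1(S_1,p)\cdots M_n(S_n,p)$ of \eqref{matprod} by inserting a complete set of intermediate connector labels between consecutive chips, the $(1,1)$ entry becomes exactly this sum over directed paths in the concatenated network from entry connector $1$ to exit connector $1$, with the chip weights multiplied left-to-right in traversal order. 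Since the entries of the $U$ and $V$ chips are noncommutative Laurent monomials in the initial data $\bx_\bj$, the order of multiplication is essential and is precisely the one dictated by the definition, so $M(p)_{1,1}$ is by construction the stated partition function.

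The argument involves no genuine obstacle beyond these bookkeeping checks; the entire content is carried by Theorem \ref{solq}. The only points that require attention are the invertibility of $T_{i_1,j_1}$ and the side on which the cancellation is performed, both forced by the form of \eqref{qsol}, together with matching the entry and exit connectors of the network to the $(1,1)$ index. I would note, as the main payoff rather than as a step of the proof, that this reformulation exhibits $T_{i,j}\,T_{i_1,j_1}^{-1}$ as an explicit sum of noncommutative Laurent monomials in $\bx_\bj$, which is the statement of Laurent positivity for these clusters promised in the introduction.
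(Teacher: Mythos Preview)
Your proposal is correct and matches the paper's approach exactly: the paper presents the corollary as an immediate rephrasing of Theorem~\ref{solq} in network language, with no additional argument given. Your explicit justification of the right-multiplication by $T_{i_1,j_1}^{-1}$ and the identification of $M(p)_{1,1}$ with the partition function via the definition preceding \eqref{matprod} simply spells out what the paper leaves implicit.
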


As all weights involved in the network $N(p)$ are Laurent monomials of
the initial data $\bx_{\bj}$ with coefficients in $\Z_+[q,q^{-1}]$, we deduce the
following positivity result, the quantum version of the
Fomin-Zelevinsky positivity conjecture:

\begin{cor}
  The expression for $T_{i,j}$ in terms of any initial data $x_\bj$ is
  a Laurent polynomial with coefficients in $\Z_+[q,q^{-1}]$.
\end{cor}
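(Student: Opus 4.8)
The plan is to read off the positivity directly from the network partition function furnished by Theorem \ref{solq}. First I would invoke Theorem \ref{solq} (equivalently, the partition-function corollary just above) to write $T_{i,j}=M(p)_{1,1}\,T_{i_1,j_1}$, where $p$ is the projection of $(i,j)$ onto the boundary $\bj$ and $M(p)$ is the product \eqref{matprod} of the elementary matrices $U$, $V$ of \eqref{dumat}. The point is that every nonzero entry of $U(a,b)$ and $V(a,b)$ --- namely $1$, $q^{-1}b^{-1}$, $ab^{-1}$, $b^{-1}$ --- is a single Laurent monomial in the two adjacent initial-data variables $a=T_{P_{k-1}}$, $b=T_{P_k}$, with coefficient lying in $\{1,q^{-1}\}\subset\Z_+[q,q^{-1}]$. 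Thus each matrix factor has all its entries among the $\Z_+[q,q^{-1}]$-Laurent monomials in $\bx_\bj$.

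Next I would expand the $(1,1)$ entry of the matrix product as a sum over directed paths through the concatenated network $N(p)$ from entry connector $1$ to exit connector $1$, as described after \eqref{udnet}: one has $M(p)_{1,1}=\sum_{\mathrm{paths}}\prod_{\mathrm{edges}}(\mathrm{weight})$, with the edge weights multiplied in the order the path traverses them. Each individual summand is therefore an ordered product of Laurent monomials in the initial data, with a coefficient that is a product of factors drawn from $\{1,q^{-1}\}$, hence again in $\Z_+[q,q^{-1}]$.

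The one genuine point to verify is that each such ordered product of monomials is itself a \emph{single} Laurent monomial with coefficient in $\Z_+[q,q^{-1}]$, despite the non-commutativity of the $T_{P_k}$. This is where the commutation relations \eqref{qcomm} do the work: any two initial-data variables $q$-commute, $T_a T_b=q^{\lambda_{ab}}T_b T_a$ with $\lambda_{ab}\in\Z$. Consequently, bringing the product into any fixed normal ordering of the variables $\{T_{P_k}\}$ only multiplies it by an integer power of $q$. Since $q,q^{-1}\in\Z_+[q,q^{-1}]$, no minus sign is ever introduced, and the normal-ordered result is a single monomial with coefficient in $\Z_+[q,q^{-1}]$. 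I expect this reordering bookkeeping --- confirming that the commutation factors are always pure powers of $q$ and never richer expressions --- to be the only (minor) obstacle; it is guaranteed by \eqref{qcomm}.

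Finally, summing over paths gives $M(p)_{1,1}$ as a $\Z_+[q,q^{-1}]$-Laurent polynomial in $\bx_\bj$ (a sum of monomials with positive coefficients stays positive), and right-multiplication by the single initial variable $T_{i_1,j_1}$, followed by one more normal-ordering, again only multiplies each monomial by a power of $q$. Hence $T_{i,j}$ is a Laurent polynomial in the initial data $\bx_\bj$ with all coefficients in $\Z_+[q,q^{-1}]$, which is the claimed quantum positivity.
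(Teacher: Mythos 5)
Your proposal is correct and takes essentially the same route as the paper, which deduces the corollary in one line from Theorem \ref{solq} on the grounds that every edge weight of the network $N(p)$ is a Laurent monomial in $\bx_\bj$ with coefficient in $\Z_+[q,q^{-1}]$. Your extra normal-ordering step via the $q$-commutation relations \eqref{qcomm}, showing that reordering only produces integer powers of $q$ and never a sign, simply makes explicit a point the paper leaves implicit.
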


\begin{example}
  Consider the case where $(i,j)=(0,4)$, with a boundary projection of
  the form: $p=((-2,2),(-1,1),(0,0),(1,1),(2,2))$. Then
  $(i_0,j_0)=(-2,2)$,
  $(i_1,j_1)=(2,2)$.  It consists of the steps $dduu$, hence
$$M(p)=V^{(-2)}(\bx_\bj)V^{(-1)}(\bx_\bj)U^{(0)}(\bx_\bj)U^{(1)}(\bx_\bj)$$
which yields
$$T_{0,4}=M(p)_{1,1}T_{2,2}= T_{-2,2}T_{0,0}^{-1}T_{2,2}+q^{-1}(T_{-2,2}T_{-1,1}^{-1}T_{0,0}^{-1}
+T_{-1,1}^{-1})(T_{1,1}^{-1}T_{2,2}+T_{0,0}T_{1,1}^{-1}) $$
The five monomials forming $T_{0,4}T_{2,2}^{-1}$ are the weights of the five paths $1\to 1$ in the
network $N(p)$ with weight matrix $M(p)$:
$$ \epsfxsize=8cm \epsfbox{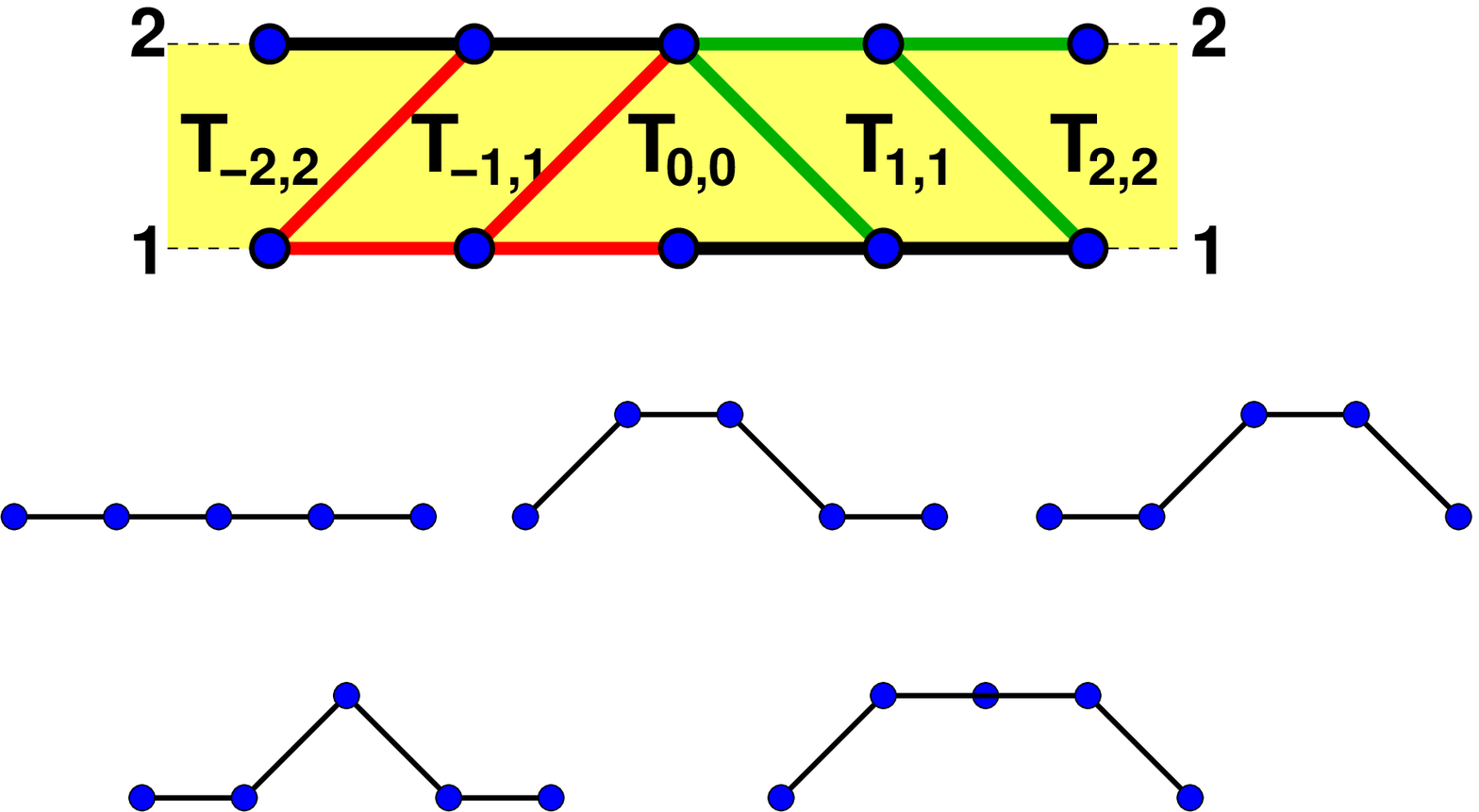}$$
\end{example}

\begin{figure}
\centering
\includegraphics[width=14.cm]{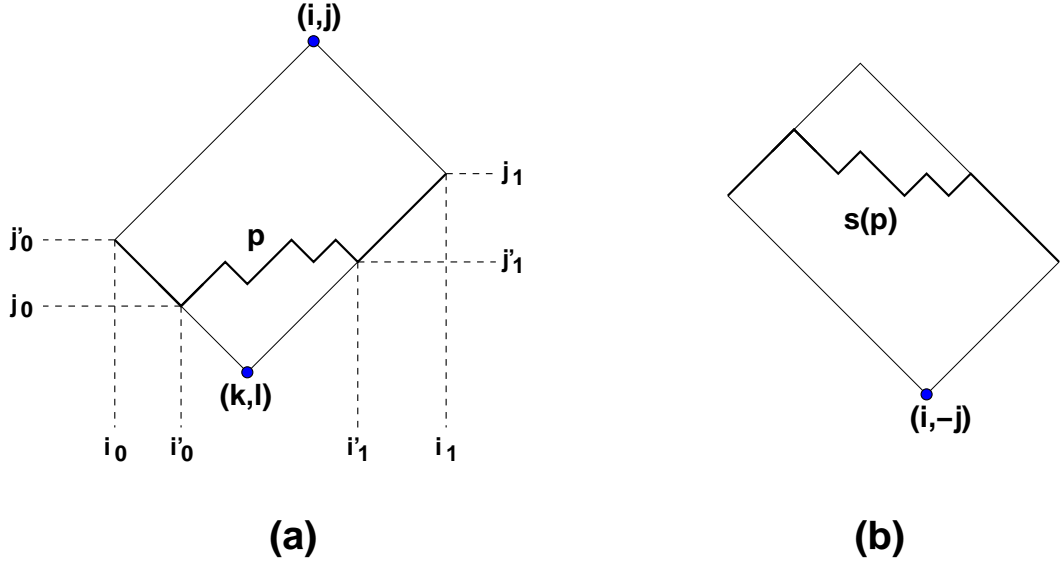}
\caption{\small (a) The projection of a point $(i,j)$ onto a boundary $\bj$,
and the corresponding point $(k,l)$. (b) The action of the reflection $s:(i,j)\mapsto (i,-j)$.}
\label{fig:bigrect}
\end{figure}

Finally, we can use the symmetry of the $T$-system under the bar
involution to compute $T_{k,\ell}$ with $\ell\leq j_k$. Given a
boundary $\bj$ and a point $(i,j)$ above it with $j\geq j_i$, let
$p$ denote the projection of $(i,j)$ onto the boundary with endpoints
$(i_0,j_0)$ and $(i_1,j_1)$. Then
$$i={i_0+i_1+j_1-j_0\over 2}, \qquad j={j_0+j_1+i_1-i_0\over 2}.$$
Let $(k,l)$ be the point under the boundary 
such that
\begin{equation}\label{defkl} 
k={i_0+i_1+j_0-j_1\over 2}, \qquad l={j_0+j_1+i_0-i_1\over 2}.\end{equation}
(See Fig.\ref{fig:bigrect} (a) for an illustration.).
Let $s: (i,j)\mapsto (i,-j)$ denote a reflection. Under $s$, the boundary $\bj$
is sent to $-\bj$ (see Fig.\ref{fig:bigrect} (b)).
The
projection of $s(k,l)$ onto the reflected boundary $-\bj$ is $s(p')$,
a sub-path of $s(p)$, with endpoints $s(i_0',j_0')$ and $s(i_1',j_1')$, with
$i_0-i_0'=j_0-j_0'\geq 0$ and $i_1'-i_1=j_1-j_1'\geq 0$. Note
that $s(p)=u^{i_0-i_0'} s(p' )d^{i_1'-i_1}$.
We have the following:

\begin{thm}\label{identwo}
Let $\bj$, $(i,j)$, $p$ and $M(p)$ be as in Theorem \ref{solq}, and $(k,l)$ the
point under the boundary $\bj$ defined by \eqref{defkl}. In terms of the initial
data $\bx_\bj$, we have:
\begin{equation} 
T_{k,l}= \big( M(p)\big)_{2,2}  T_{i_1,j_1}.
\end{equation}
\end{thm}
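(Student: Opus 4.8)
The plan is to prove Theorem~\ref{identwo} by the same inductive-mutation strategy used for Theorem~\ref{solq}, but now tracking the $(2,2)$ entry of $M(p)$ instead of the $(1,1)$ entry, and exploiting the reflection symmetry $s:(i,j)\mapsto(i,-j)$ together with the bar-involution of Remark~\ref{barinv}. First I would set up the base case: when $(i,j)$ lies on the boundary $\bj$ itself, the projection $p$ is empty, $M(p)={\mathbf I}$, the point $(k,l)$ defined by \eqref{defkl} coincides with $(i_1,j_1)=(i,j)$, and $T_{k,l}=({\mathbf I})_{2,2}T_{i_1,j_1}$ holds trivially. The inductive step then mirrors the five cases (a)--(e) of Figure~\ref{fig:fivecases}, using the intertwining relation \eqref{mutmat} to show that each mutation $\mu_a^\pm$ sends $(M(p))_{2,2}T_{i_1,j_1}$ to $(M(p'))_{2,2}T_{i_1',j_1'}$. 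The only difference from the proof of Theorem~\ref{solq} is that the boundary reductions use the other trivial entries of the chip matrices, namely $(U(a,b))_{i,2}=\delta_{i,2}\,ab^{-1}$ and $(V(a,b))_{2,j}=\delta_{j,2}$, which are the reflected analogues of the relations $(U)_{1,j}=\delta_{j,1}$ and $(V(a,b))_{i,1}=\delta_{i,1}ab^{-1}$ exploited in cases (a)--(d).

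An alternative, and I think cleaner, route is to derive Theorem~\ref{identwo} directly from Theorem~\ref{solq} by applying the reflection $s$. Under $s$ the boundary $\bj$ goes to $-\bj$, and by the geometry described just before the statement, the projection of $s(k,l)$ onto $-\bj$ is the sub-path $s(p')$ with $s(p)=u^{i_0-i_0'}\,s(p')\,d^{i_1'-i_1}$. I would apply Theorem~\ref{solq} to the reflected configuration to get $T_{k,-l}$ (equivalently $T_{s(k,l)}$) as $M(s(p'))_{1,1}$ times a boundary variable, then use the bar-involution $*$ of Remark~\ref{barinv}, which satisfies $q^*=q^{-1}$ and $T_{i,j}^*=qT_{i,j}$, to relate the reflected quantities back to the original $T_{k,l}$. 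The reflection interchanges up and down steps, which turns $U$-chips into $V$-chips and swaps the two rows; concretely I expect $s$ to act on the chip matrices by conjugation with the antidiagonal $\begin{pmatrix}0&1\\1&0\end{pmatrix}$ together with $a\leftrightarrow b$, so that the $(1,1)$ entry of the reflected matrix product becomes the $(2,2)$ entry of $M(p)$.

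The key bookkeeping I would carry out is verifying that the leftover boundary steps $u^{i_0-i_0'}$ and $d^{i_1'-i_1}$ contribute only trivial entries. A down step $V(a,b)$ has $(V(a,b))_{2,j}=\delta_{j,2}$, and an up step $U(a,b)$ has $(U(a,b))_{1,j}=\delta_{j,1}$, so prepending the $u$-block and appending the $d$-block to the matrix product does not change the relevant matrix element beyond a clean factorization; this is exactly what makes $s(p')$ a sub-path of $s(p)$ usable. I would then check that the boundary normalization variable produced by Theorem~\ref{solq} on the reflected side, after applying $*$ and the factor $q^{\pm 1}$ coming from $T^*=qT$, reproduces precisely $T_{i_1,j_1}$ on the nose with no residual power of $q$.

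The main obstacle I anticipate is the careful tracking of powers of $q$ and of the non-commutative ordering under the reflection and the bar-involution. Because $*$ is an antiautomorphism, it reverses the order of matrix products, and the chip matrices $U,V$ do not have scalar entries --- their entries are non-commutative monomials in the $T$'s obeying the relations \eqref{qcomm}. So I must confirm that reversing the product, swapping rows via the antidiagonal conjugation, and applying $T^*=qT$ all conspire to leave the final coefficient in $\Z_+[q,q^{-1}]$ with no spurious central $q$-factor, exactly matching $(M(p))_{2,2}T_{i_1,j_1}$. If the direct reflection argument gets bogged down in these ordering subtleties, the self-contained inductive proof over cases (a)--(e) is the safer fallback, since there each step is controlled by the single clean relation \eqref{mutmat}.
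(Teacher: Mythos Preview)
Your preferred ``reflection'' route is exactly the paper's proof. The paper makes the heuristic precise by setting $S_{i,j}=qT_{i,-j}$, checking via the bar involution that $S$ solves the $T$-system with $q\to q^{-1}$, applying Theorem~\ref{solq} to $S$ with data $q\bx_{-\bj}$, and then using the clean chip identity
\[
U(qa,qb;q^{-1})=J\,V(a,b;q)\,J,\qquad V(qa,qb;q^{-1})=J\,U(a,b;q)\,J,\qquad J=\begin{pmatrix}0&1\\1&0\end{pmatrix},
\]
which is exactly your ``conjugation by the antidiagonal together with $a\leftrightarrow b$''. This immediately turns the $(1,1)$ entry on the reflected side into the $(2,2)$ entry of $M(s(p))$, and the leftover $d$- and $u$-blocks are absorbed by precisely the triangular relations you wrote down, producing the factor $T_{i_1',j_1'}T_{i_1,j_1}^{-1}$ and hence the stated formula with $T_{i_1,j_1}$ on the right. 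So the $q$-bookkeeping you were worried about is resolved in one stroke by working with the variables $qa,qb$ and the parameter $q^{-1}$ in the chip matrices; no residual power of $q$ survives.

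Your fallback inductive approach, as written, has a gap. You fix $(i,j)$ and propose to show that $(M(p))_{2,2}T_{i_1,j_1}$ is invariant under mutation, paralleling Theorem~\ref{solq}. But in cases (a)--(d) the projection endpoints move, and by \eqref{defkl} the point $(k,l)$ moves with them (e.g.\ in case~(a) one finds $(k',l')=(k-1,l-1)$). So the two sides are \emph{not} supposed to be equal: the left side is a different $T_{k',l'}$, and the computation of $(M(p'))_{2,2}$ picks up the off-diagonal entries $(M(p))_{1,2}$ or $(M(p))_{2,1}$, which your inductive hypothesis does not control. The fix is to run the induction with $(k,l)$ fixed, using its \emph{upward} projection $p'$ onto $\bj$ (which starts with $u$ and ends with $d$); then $T_{k,l}$ is genuinely constant, the five cases go through with the reflected triangular identities you listed, and the passage from $p'$ back to the longer path $p$ is the same boundary-step reduction as in the paper's last paragraph.
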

\begin{proof}
  Note first that $S_{i,j}=qT_{i,-j}$ is a solution of the $A_1$ quantum
  $T$-system with $q\to q^{-1}$. Indeed, from \eqref{qtsys},
  $q^{-1}S_{i,j-1}S_{i,j+1}=q^{-2} S_{i+1,j}S_{i-1,j}+1$, and upon
  applying the bar involution of Remark \ref{barinv},
  $q^{-1}S_{i,j+1}S_{i,j-1}=S_{i+1,j}S_{i-1,j}+1$.  Therefore with the
  initial data $q\bx_{-\bj}=(q T_{i,j_i})_{i\in\Z}$, we have
$$q^{-1}S_{i,j}=T_{i,-j}=M(p,q\bx_{-\bj},q^{-1})_{1,1} q^{-1}S_{i_1,j_1}
=M(p,q\bx_{-\bj},q^{-1})_{1,1} T_{i_1,-j_1}$$ Here, we have made
explicit the arguments of $M(p):=M(p,\by,t)$, where $\by$ is the
boundary data in the $V,U$ matrices, and
the quantum parameter is $t$.
Similarly, let $U^{(i)}(\by)=U^{(i)}(\by,t)$
and $V^{(i)}(\by)=V^{(i)}(\by,t)$ where $t$ is the quantum parameter.
Then $M(p,q\bx_{-\bj},q^{-1})$ is obtained from $M(p,\bx_\bj,q)$ upon
substitution of the matrices
$$ U^{(i)}(\bx_\bj,q)\mapsto U^{(i)}(q\bx_{-\bj},q^{-1})\quad {\rm and} \quad
 V^{(i)}(\bx_\bj,q) \mapsto  V^{(i)}(q\bx_{-\bj},q^{-1}) $$
We note that
$$ U(q a,q b;q^{-1}) =J V(a,b;q)J \quad {\rm and}\quad V(q a,q b;q^{-1}) =J U(a,b;q)J $$
where $J=\begin{pmatrix} 0 & 1\\ 1 &
  0 \end{pmatrix}$ is the permutation matrix. Therefore,
$$T_{i,-j}=(JM(s(p),\bx_{-\bj},q)J)_{1,1} T_{i_1,-j_1}=M(s(p),\bx_{-\bj},q)_{2,2}T_{i_1,-j_1}$$
where $s(p)$ is the reflected path with $d$ and $u$ steps interchanged. This last identity
corresponds to the reflecting of the entire initial picture. Upon renaming $(i,-j)\to (k,l)$,
$(i_1,-j_1)\to (i_1',j_1')$, $(i_0,-j_0)\to (i_0',j_0')$, 
$s(p)\to p'$, and $\bx_{-\bj}\to \bx_\bj$, we deduce that
$$T_{k,l}=M(p')_{2,2} T_{i_1',j_1'}$$
Recalling finally that $p=d^{i_0-i_0'} p' u^{i_1'-i_1}$
and that, due to the particular triangular form of the $U,V$ matrices:
$$\left(\prod_{i=i_0'}^{i_0-1}V^{(i)}(\bx_\bj)\right)_{2,m}=\delta_{m,2},\qquad 
\left(\prod_{i=i_1}^{i_1'-1} U^{(i)}(\bx_\bj)\right)_{m,2}=\delta_{m,2} T_{i_1',j_1'}T_{i_1,j_1}^{-1},$$
the Theorem follows.
\end{proof}

\subsection{Conserved quantities and $V, U$ matrices}

In this section we investigate the content of the full network matrix
$M(p)$ of Theorems \ref{solq} and \ref{identwo}, in the special case of the
fundamental boundary $\bj_0$, with heights $j_i=i\mod2$.

\begin{figure}
\centering
\includegraphics[width=10.cm]{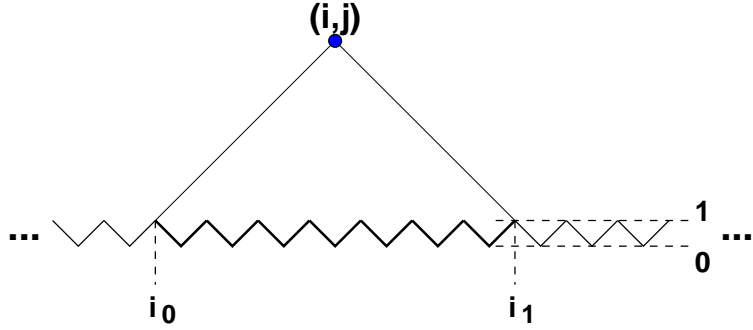}
\caption{\small The projection of a point $(i,j)$ onto the fundamental boundary $\bj_0$.}
\label{fig:bigsquare}
\end{figure}

Let $p$ be the projection of $(i,j)\in \Z^2$ with $j\geq (i\
{\rm mod}\ 2)$ onto the boundary $\bj_0$, consisting of vertices $\{(a,j_a)\,:
\,i_0\leq a \leq i_1\}$ with $i_0,i_1$ odd (see
Fig.\ref{fig:bigsquare}).

We have
$$ M(p)=(V^{(i_0)}U^{(i_0+1)}
\cdots V^{(i_1-2)}U^{(i_1-1)})(\bx_\bj).$$ 
Let $n=(i_1-i_0)/2$, then
the length of $p$ is $2n$.
%

We focus on the non-diagonal terms of $M(p) T_{i_1,1}$. These involve
what we call (in analogy with the commutative case) the conserved
quantities of the quantum $A_1$ $T$-system:
\begin{lemma}
Let $(i,j)\in \Z^2$, and $T$ a solution of the quantum $A_1$ $T$-system
\eqref{qtsys}. Then
$$c_{i,j}=T_{i-1,j+1}T_{i,j}^{-1}+T_{i,j}^{-1}T_{i+1,j-1}$$
is independent of $i-j$ and 
$$
d_{i,j}= T_{i-1,j-1}T_{i,j}^{-1}+T_{i,j}^{-1}T_{i+1,j+1}$$
is independent of $i+j$. That is,
$$c_{i,j}=c_{i-1,j-1}=c_{i-j,0}:=c_{i-j}\qquad d_{i,j}=d_{i+1,j-1}=d_{i+j,0}:=d_{i+j}$$
\end{lemma}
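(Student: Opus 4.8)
The plan is to prove the stronger one-step statements $c_{i,j}=c_{i+1,j+1}$ and $d_{i,j}=d_{i+1,j-1}$; iterating the first along the main diagonal (constant $i-j$) and the second along the anti-diagonal (constant $i+j$) then slides any point with $i+j$ even onto the axis $j=0$, giving $c_{i,j}=c_{i-j,0}=:c_{i-j}$ and $d_{i,j}=d_{i+j,0}=:d_{i+j}$. I would treat $c$ first and deduce the statement for $d$ by the reflection symmetry at the very end.

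First I would record the two $q$-commutations needed, both being instances of Lemma \eqref{qcomm} with $k=0$, $m=1$ (so the exponent is $(-1)^0\tfrac{1-(-1)}{2}=1$): namely $T_{i+1,j-1}T_{i,j}=q\,T_{i,j}T_{i+1,j-1}$ and, applied with base point $(i-1,j+1)$, $T_{i,j}T_{i-1,j+1}=q\,T_{i-1,j+1}T_{i,j}$. Using these to conjugate the factor $T_{i,j}$ through $c_{i,j}=T_{i-1,j+1}T_{i,j}^{-1}+T_{i,j}^{-1}T_{i+1,j-1}$ yields the two ``normalized'' forms
$$c_{i,j}T_{i,j}=T_{i-1,j+1}+q\,T_{i+1,j-1},\qquad T_{i,j}c_{i,j}=q\,T_{i-1,j+1}+T_{i+1,j-1}.$$
The first form at $(i,j)$ gives $c_{i,j}=(T_{i-1,j+1}+q\,T_{i+1,j-1})T_{i,j}^{-1}$, while the second form evaluated at $(i+1,j+1)$ gives $T_{i+1,j+1}c_{i+1,j+1}=q\,T_{i,j+2}+T_{i+2,j}$.

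The heart of the argument is then to bridge the two anti-diagonals $i+j$ and $i+j+2$ using the $T$-system relations \eqref{qtsys} centered at the odd points $(i,j+1)$ and $(i+1,j)$, which read $q\,T_{i,j+2}T_{i,j}=T_{i+1,j+1}T_{i-1,j+1}+1$ and $q\,T_{i+1,j+1}T_{i+1,j-1}=T_{i+2,j}T_{i,j}+1$. Solving these for $q\,T_{i,j+2}$ and for $T_{i+2,j}$ and substituting into $T_{i+1,j+1}c_{i+1,j+1}=q\,T_{i,j+2}+T_{i+2,j}$, the constant terms $+1$ and $-1$ cancel, leaving $T_{i+1,j+1}c_{i+1,j+1}=T_{i+1,j+1}\bigl(T_{i-1,j+1}+q\,T_{i+1,j-1}\bigr)T_{i,j}^{-1}$. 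Left-cancelling $T_{i+1,j+1}$ reproduces exactly the expression found for $c_{i,j}$, so $c_{i+1,j+1}=c_{i,j}$. The one genuine obstacle here is purely bookkeeping: keeping the $q$-powers and the left/right placements straight, so that the two $\pm1$ terms really do cancel and the surviving factor $T_{i+1,j+1}$ appears on the \emph{left} on both sides, allowing the cancellation.

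Finally, for $d$ I would invoke the reflection symmetry used in the proof of Theorem \ref{identwo}: $S_{i,j}:=q\,T_{i,-j}$ solves the quantum $A_1$ $T$-system with $q\to q^{-1}$ (by Remark \ref{barinv}). A direct substitution shows that the quantity $c$ built from $S$ with parameter $q^{-1}$ at $(i,j)$ equals $d_{i,-j}$ built from $T$, since $S_{i-1,j+1}S_{i,j}^{-1}+S_{i,j}^{-1}S_{i+1,j-1}=T_{i-1,-j-1}T_{i,-j}^{-1}+T_{i,-j}^{-1}T_{i+1,-j+1}$. Applying the $c$-result already proved to the solution $S$ shows this is independent of $i-j$, whence $d_{i,j}$ depends only on $i+j$, i.e. $d_{i,j}=d_{i+1,j-1}=d_{i+j}$. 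This reuses the entire computation above with $q$ inverted and the up/down neighbors interchanged, so no new cancellation needs to be verified.
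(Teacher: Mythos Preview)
Your argument is correct and essentially the same as the paper's: both combine two instances of the $T$-system relation with the nearest-neighbor $q$-commutations so that a common term cancels, yielding $c_{i,j}=c_{i\pm1,j\pm1}$. The only differences are organizational --- the paper subtracts the two rearranged $T$-system identities directly (the shared term $T_{i,j}^{-1}T_{i-1,j-1}^{-1}$ drops out) rather than first normalizing $c_{i,j}T_{i,j}$ and $T_{i+1,j+1}c_{i+1,j+1}$, and for $d$ the paper simply says ``similarly'' while you deduce it from the $c$-result via the reflection $S_{i,j}=qT_{i,-j}$, which is a nice shortcut.
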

\begin{proof}
We write the $T$-system equations:
\begin{equation*}\left\{\begin{matrix}
qT_{i,j}T_{i,j-2}=T_{i+1,j-1}T_{i-1,j-1}+1 \hfill &\Rightarrow 
T_{i-1,j-1}^{-1}T_{i,j-2}=T_{i,j}^{-1}T_{i+1,j-1}+T_{i,j}^{-1}T_{i-1,j-1}^{-1}\hfill \\
qT_{i-1,j+1}T_{i-1,j-1}=T_{i,j}T_{i-2,j}+1 \hfill &\Rightarrow 
T_{i-1,j+1}T_{i,j}^{-1}=T_{i-2,j}T_{i-1,j-1}^{-1}+T_{i,j}^{-1}T_{i-1,j-1}^{-1}
\hfill \end{matrix}\right.
\end{equation*}
where in the first line we have used the $q$-commutation of $T_{i-1,j-1}^{-1}$ and $T_{i,j-2}$
and that of $T_{i-1,j+1}$ and $T_{i,j}^{-1}$ in the second.
Subtracting these last two equations leads to $c_{i,j}=c_{i-1,j-1}$. The
conservation of $d$ is proved in a similar manner.
\end{proof}

We also define by induction the following polynomials of the conserved
quantities:
\begin{equation*}\left\{ \begin{matrix} \varphi^{(-1)}_m=0, & \varphi^{(0)}_m=1, &
\varphi^{(p)}_m= \varphi^{(p-1)}_m c_{m+2p-2} -q \, \varphi^{(p-2)}_m\hfill \\
\theta^{(-1)}_m=0, & \theta^{(0)}_m=1, &
\theta^{(p)}_m=d_{m+2-2p}\, \theta^{(p-1)}_m- q^{-1}\theta^{(p-2)}_m \hfill
\end{matrix}\right. \quad (p\geq 1, m\in\Z)
\end{equation*}

Note that, while it is true that $[c_m,d_p]=0=[\varphi^{(i)}_m,\theta^{(j)}_p]$,
for all $i,j,m,p$, 
neither the $c$'s nor the $d$'s commute among themselves. 

We have:
\begin{thm}
Let $(i,j)$ be a point above the boundary $\bj=\bj_0$ and $p$ be the projection
of $(i,j)$ onto the boundary, $p$ a zig-zag path with endpoints
$(i_0,1)$ and $(i_1,1)$ and length $2n=i_1-i_0$, as in Fig.\ref{fig:bigsquare}. Then
\begin{eqnarray*}
 \big( M(p)\big)_{1,2} \, T_{i_1,1}&=&\varphi^{(n-1)}_{i-j+2},\\
 \big( M(p)\big)_{2,1} \, T_{i_1,1}&=&q^{-1}\, \theta^{(n-1)}_{i+j-2}.
\end{eqnarray*}

\end{thm}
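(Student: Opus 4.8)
The plan is to prove both identities simultaneously by induction on $n$, peeling a single elementary double-block $V^{(a)}U^{(a+1)}$ off the product $M(p)$ and showing that suitably normalized matrix entries obey precisely the three-term recursions defining $\varphi^{(n-1)}$ and $\theta^{(n-1)}$. First I would record the projection data on the fundamental boundary: since $j_{i_0}=j_{i_1}=1$, the projection formulas give $i=(i_0+i_1)/2$ and $j=n+1$, so $i-j+2=i_0+1$ and $i+j-2=i_1-1$. Thus the $\varphi$-index depends only on the left endpoint $i_0$, and the $\theta$-index only on the right endpoint $i_1$. This fixes the direction of the induction: I peel from the \emph{right} for the $(1,2)$-entry, keeping $i_0$ and hence the $\varphi$-index fixed, and treat the $(2,1)$-entry by the mirror argument.

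The base cases are $n=0$ (empty path, $M(p)=\mathbf I$, giving $\varphi^{(-1)}=0=\theta^{(-1)}$) and $n=1$ (a single block $V(T_{i_0,1},T_{i_0+1,0})U(T_{i_0+1,0},T_{i_1,1})$, whose entries from \eqref{dumat} are $(M)_{1,2}=T_{i_1,1}^{-1}$ and $(M)_{2,1}=q^{-1}T_{i_1,1}^{-1}$, matching $\varphi^{(0)}=1$ and $q^{-1}\theta^{(0)}=q^{-1}$). For the inductive step of the $\varphi$-identity I write $M_n=M_{n-1}\,V^{(i_1-2)}U^{(i_1-1)}$ with right-block faces $c=T_{i_1-2,1}$, $b=T_{i_1-1,0}$, $a'=T_{i_1,1}$. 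Because $U,V$ are triangular, the first row of $M_n$ depends only on the first row of $M_{n-1}$, so I set $w_n=(M_n)_{1,2}\,T_{i_1,1}$. Peeling $M_{n-1}$ once more (with faces $\widetilde c=T_{i_1-4,1}$, $\widetilde b=T_{i_1-3,0}$, $c$) and eliminating the $(1,1)$-entries, with a clean cancellation of the two $q^{-1}\widetilde b^{-1}c^{-1}$ terms, yields
\[
w_n = w_{n-1}\bigl(\widetilde c\,\widetilde b^{-1}+q^{-1}\widetilde b^{-1}c^{-1}+c^{-1}b\bigr)-w_{n-2}\,\widetilde c^{-1}\widetilde b\,\widetilde c\,\widetilde b^{-1}.
\]

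The two things left to verify are purely local. First, the commutation $\widetilde b\,\widetilde c=q\,\widetilde c\,\widetilde b$ from \eqref{qcomm} gives $\widetilde c^{-1}\widetilde b\,\widetilde c\,\widetilde b^{-1}=q$, producing the coefficient $-q$ of $w_{n-2}$. Second, applying the $T$-system \eqref{qtsys} at $(i_1-2,0)$ to rewrite $T_{i_1-2,-1}$, together with the commutations $\widetilde b\,c=q\,c\,\widetilde b$ and $\widetilde b\,b=b\,\widetilde b$, identifies $\widetilde c\,\widetilde b^{-1}+q^{-1}\widetilde b^{-1}c^{-1}+c^{-1}b$ with $c_{i_1-3,0}=c_{i_1-3}$; concretely one checks $q^{-1}\widetilde b^{-1}c^{-1}b\,\widetilde b=c^{-1}b$, so the three boundary monomials collapse to the two-term conserved quantity. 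Since $i_1-3=(i_0+1)+2(n-1)-2$, this is exactly $\varphi^{(n-1)}_{i_0+1}=\varphi^{(n-2)}_{i_0+1}\,c_{i_1-3}-q\,\varphi^{(n-3)}_{i_0+1}$ (with the conserved quantity on the right, as in its defining recursion), and the induction closes.

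For the $(2,1)$-entry I would either run the mirror computation, peeling from the \emph{left} so that $i_1$ and hence the $\theta$-index stay fixed while the new leftmost block produces the conserved quantity $d_{i_0+3}$ with the opposite power $q^{-1}$, or deduce it directly from the $\varphi$-identity via the reflection symmetry of Theorem \ref{identwo}: conjugation by $J$ swaps the $(1,2)$ and $(2,1)$ entries and exchanges $U\leftrightarrow V$ (reversing the path), while $(i,j)\mapsto(i,-j)$ with $q\to q^{-1}$ interchanges $c\leftrightarrow d$ and $\varphi\leftrightarrow\theta$, carrying one recursion to the other. The main obstacle is the second local verification above: the non-commutative bookkeeping needed to collapse the three boundary monomials into the single position-independent conserved quantity $c_{i_1-3}$ with every power of $q$ correctly accounted for. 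Everything else—triangularity, the scalar $-q$ cancellation, and the base cases—is routine once the normalization $w_n=(M_n)_{1,2}T_{i_1,1}$ and the correct direction of peeling are chosen.
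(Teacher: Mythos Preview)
Your argument is correct and complete: the double peel produces the claimed recursion $w_n=w_{n-1}(\widetilde c\,\widetilde b^{-1}+q^{-1}\widetilde b^{-1}c^{-1}+c^{-1}b)-q\,w_{n-2}$, and the local identity $\widetilde c\,\widetilde b^{-1}+q^{-1}\widetilde b^{-1}c^{-1}+c^{-1}b=c_{i_1-3}$ follows exactly as you outline, by plugging $T_{i_1-2,-1}=q^{-1}c^{-1}(b\widetilde b+1)$ into $c_{i_1-3,0}=\widetilde c\,\widetilde b^{-1}+\widetilde b^{-1}T_{i_1-2,-1}$ and commuting $\widetilde b$ past $c^{-1}$.

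The route, however, differs from the paper's. The paper peels only \emph{once}, then invokes Theorem~\ref{solq} to replace $(M_{n-1})_{1,1}T_{i_1-2,1}$ by $T_{i-1,j-1}$; the resulting expression still contains the off-boundary factor $T_{i-1,j-1}T_{i_1-2,1}^{-1}$, which the paper eliminates by a translation trick: it rewrites the induction hypothesis for $\varphi^{(n-1)}_{i-j+2}$ with $(i,j)\mapsto(i+1,j+1)$ (legitimate since $\varphi$ depends only on $i-j$), and this shift lands the conserved quantity directly in its two-term form $c_{i_1-2,1}=T_{i_1-3,2}T_{i_1-2,1}^{-1}+T_{i_1-2,1}^{-1}T_{i_1-1,0}$, with no $T$-system substitution needed. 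Your approach trades that shift trick for a second peel and a direct elimination of the $(1,1)$-entries; it is more elementary in that it never calls on Theorem~\ref{solq} or on the translation invariance of $\varphi$, at the cost of having to collapse a three-term boundary expression into $c_{i_1-3}$ by hand via one application of \eqref{qtsys}. Either method yields the same three-term recursion, and your mirror/reflection argument for the $(2,1)$-entry is fine.
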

\begin{proof} By induction on $n$. For $n=0$, $M(p)={\mathbf I}$, and
  the theorem holds, as $\varphi^{(-1)}_m=\theta^{(-1)}_m=0$. 

  Assume the theorem holds for paths $p'$ of length $2 n$.  Let
  $p$ be a path of length $2n+2$, with $i_0=i-n-1$,
  $i_1=i+n+1$, $j=n+2$. Denote by $p'$ the truncated projection
  between the lines $i=i_0$ and $i=i_1-2$.  For simplicity, we
  introduce the following notation: $U^{(i)}(\bx_\bj)=\langle i \vert
  U \vert i+1 \rangle$ and $V^{(i)}(\bx_\bj)=\langle i \vert V \vert
  i+1 \rangle$, with $\vert i \rangle\langle i \vert =I$ for all $i$.
  We have
\begin{eqnarray}
(M(p))_{1,2}T_{i_1,1}&=&\big(\langle i_0|(VU)^{n+1}|i_1\rangle\big)_{1,2}T_{i_1,1}\nonumber \\
&=&\big(\langle i_0|(VU)^n|i_1-2\rangle\langle i_1-2|VU|i_1\rangle)\big)_{1,2}T_{i_1,1}\nonumber  \\
&=&\big(M(p')\langle i_1-2|V|i_1-1\rangle)\big)_{1,2}T_{i_1-1,0}\nonumber \\
&=&(M(p'))_{1,2}T_{i_1-2,1}T_{i_1-2,1}^{-1}T_{i_1-1,0}+(M(p'))_{1,1}\nonumber \\
&=&\varphi^{(n-1)}_{i-j+2}T_{i_1-2,1}^{-1}T_{i_1-1,0}+T_{i-1,j-1} T_{i_1-2,1}^{-1} \label{interca}
\end{eqnarray}
by applying Theorem \ref{solq}. Repeating this calculation for 
$(M(p'))_{1,2}T_{i_1-2,1}=\varphi^{(n-1)}_{i-j+2}$ by the recursion hypothesis, we
get analogously:
$$ \varphi^{(n-1)}_{i-j+2} =\varphi^{(n-2)}_{i-j+2}T_{i_1-4,1}^{-1}T_{i_1-3,0}+T_{i-2,j-2} T_{i_1-4,1}^{-1}$$
We may consider this identity with indices $i,j$ shifted by $+1$, while $n$ remains fixed, namely
$$\varphi^{(n-1)}_{i-j+2} =\varphi^{(n-2)}_{i-j+2}T_{i_1-3,2}^{-1}T_{i_1-2,1}+T_{i-1,j-1} T_{i_1-3,2}^{-1}$$
from which we deduce:
\begin{eqnarray*}
\varphi^{(n-1)}_{i-j+2} T_{i_1-3,2}T_{i_1-2,1}^{-1}&=&
\varphi^{(n-2)}_{i-j+2}T_{i_1-3,2}^{-1}T_{i_1-2,1}T_{i_1-3,2}T_{i_1-2,1}^{-1}+T_{i-1,j-1} T_{i_1-2,1}^{-1}\\
&=& q\, \varphi^{(n-2)}_{i-j+2}+T_{i-1,j-1} T_{i_1-2,1}^{-1}
\end{eqnarray*}
by use of the $q$ commutation of $T_{i_1-2,1}$ and $T_{i_1-3,2}$.
Comparing with \eqref{interca}, we finally get
\begin{eqnarray*}(M(p))_{1,2}T_{i_1,1}&=&
\varphi^{(n-1)}_{i-j+2}(T_{i_1-3,2}T_{i_1-2,1}^{-1}+T_{i_1-2,1}^{-1}T_{i_1-1,0})-q\, \varphi^{(n-2)}_{i-j+2}\\
&=&
\varphi^{(n-1)}_{i-j+2}c_{i+n-2} -q\, \varphi^{(n-2)}_{i-j+2}=
\varphi^{(n-1)}_{i-j+2}c_{i-j+2n} -q\, \varphi^{(n-2)}_{i-j+2}=\varphi^{(n)}_{i-j+2}
\end{eqnarray*}
by the defining recursion relation for $\varphi^{(n)}_m$. The second statement of the theorem
follows analogously.
\end{proof}

\section{Quantum $Q$-system for $A_1$ and its fully non-commutative version}\label{section6}



\subsection{Quantum $Q$-system: from the path solution to the network solution}

\subsubsection{$A_1$ $Q$-system}
Closely related to the $A_1$ $T$-system is the $A_1$ $Q$-system:
\begin{equation}\label{aoneQsys} R_{j+1}R_{j-1}=R_j^2+1 \qquad (j\in \Z)
\end{equation}
which is satisfied by the 2-periodic solutions of the $A_1$ $T$-system in $i$,
namely with $T_{i+2,j}=T_{i,j}=T_{j\, {\rm mod}\, 2,j}=R_j$. The admissible initial data are
of the form $\bx_n=(R_n,R_{n+1})$ and are the restrictions of the $2$-periodic initial data
of the $T$-system with $\bj$ such that $j_i=n$ if $i-n=0$ mod 2, and $j_i=n+1$ otherwise.

The $A_1$ $Q$-system \eqref{aoneQsys} has an associated cluster algebra \cite{Ke07}
with fundamental seed made of the cluster $\bx_0=(R_0,R_1)$ and of the exchange
matrix $B_0=\begin{pmatrix} 0 & 2\\ -2 & 0\end{pmatrix}$.
A forward mutation $\mu^+:\bx_n\mapsto \bx_{n+1}$ takes
$R_n\to R_{n+2}=(R_{n+1}^2+1)/R_n$, while a backward one $\mu^-:\bx_n\mapsto \bx_{n-1}$ takes 
$R_{n+1}\to R_{n-1}=(R_n^2+1)/R_{n+1}$.

\subsubsection{Quantum $A_1$ $Q$-system from quantum cluster algebra}

The quantum cluster algebra associated to the cluster
algebra of the $A_1$ $Q$-system \eqref{aoneQsys}, is obtained by taking 
the admissible pair $(B_0,\Lambda)$ with $\Lambda=2(B_0)^{-1}=\begin{pmatrix} 0 & 1\\ -1 & 0\end{pmatrix}$
(see Ref.\cite{BZ,DFK10}), which amounts to the commutation relation $R_0R_1=q\,R_1R_0$ for the fundamental
initial data $\bx_0=(R_0,R_1)$. 
The quantum $A_1$ $Q$-system \cite{BZ,DFK10} expresses the mutations of the quantum cluster
algebra, and reads:
\begin{equation}\label{qaoneQsys}
q R_{j+1}R_{j-1}=R_j^2+1 \qquad (j\in \Z)
\end{equation}
Together with the above fundamental initial data, compatibility implies the following commutation relation holds
within each cluster $\bx_n=(R_n,R_{n+1})$:
\begin{equation}\label{comQ} R_n R_{n+1}=q \, R_{n+1}R_n
\end{equation}


Note that, like in the commuting case, the solutions of the quantum $A_1$ 
$Q$-system \eqref{qaoneQsys} are the solutions of the 
quantum $A_1$ $T$-system \eqref{qtsys} that are $2$-periodic in the index $i$, namely with
$R_j=T_{j\!\! \mod 2,j}$.

\subsubsection{Solution via quantum paths v/s quantum networks}
The quantum $A_1$ $Q$-system \eqref{qaoneQsys} 
was solved in \cite{DFK10} in terms of ``quantum" paths as follows. 
Let us define weights
\begin{equation}\label{ys} y_1=R_1R_0^{-1} \qquad y_2=R_1^{-1}R_0^{-1} \qquad y_3=R_1^{-1}R_0 \end{equation}
These weights satisfy the relations:
$$ y_1y_3=y_3y_1=q\qquad y_1y_2=q^2\,y_2y_1\qquad y_2y_3=q^2\,y_3y_2$$
We consider ``quantum" paths on the integer segment $[0,3]$ from and
to the origin $0$, with steps $\pm1$ with weight $1$ for $i\to i+1$, $i=0,1,2$ and
$y_i$ for $i\to i-1$, $i=1,2,3$. The weight of a path $p$, $w(p)$ is the product of the step weights
in the order in which they are taken. The partition function for quantum paths of length $2j$
is the sum over the paths $p$ from and to the origin, with $2j$ steps, of the weights $w(p)$.
We have the following 
\begin{thm}(\cite{DFK10})
For $j\in\Z_+$, the solution $R_j$ to the quantum $A_1$ $Q$-system \eqref{qaoneQsys} is equal to the partition 
function for quantum paths of length $2j$, times $R_0$.
\end{thm}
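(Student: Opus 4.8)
The plan is to deduce the $Q$-system solution as a specialization of the $T$-system network solution already established in Theorem \ref{solq}. Since the $2$-periodic solutions $R_j = T_{j\bmod 2,\,j}$ solve \eqref{qaoneQsys}, and since the fundamental $Q$-system seed $\bx_0=(R_0,R_1)$ corresponds to the $2$-periodic restriction of the fundamental $T$-system staircase, the idea is to take the point $(i,j)$ with $i\equiv j\bmod 2$ sitting above the fundamental boundary $\bj_0$ and apply \eqref{qsol}. The projection $p$ of such a point onto $\bj_0$ is the zig-zag path $(VU)^n(\bx_{\bj})$ studied in Section \ref{section5}, and the matrix entries of $M(p)$ specialize, under the $2$-periodicity $T_{i,\ell}=R_\ell$, to products of the $U,V$ chips \eqref{dumat} whose arguments cycle through $R_0,R_1$.

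First I would set up the dictionary between the two combinatorial models. The $U$ and $V$ chips carry arguments $(a,b)=(T_{P_{k-1}},T_{P_k})$; under $2$-periodicity these collapse to the pair $(R_0,R_1)$ (up to the appropriate index parity), so the network weights become Laurent monomials in $R_0,R_1$. I would then identify these monomials with the path step-weights $y_1,y_2,y_3$ of \eqref{ys}: a $2$-periodic network step that lowers index contributes an $R_1^{\pm1}R_0^{\mp1}$ or $R_1^{-1}R_0^{-1}$ factor, matching one of $y_1,y_2,y_3$, while an index-raising step contributes weight $1$. The key structural observation is that the two-row network $N(p)$, read under $2$-periodicity, is isomorphic to the quantum path model on $[0,3]$: the height in the path model tracks which of the three descending weights is available, and the entry/exit connector $1\to1$ condition on $M(p)_{1,1}$ translates into paths returning to the origin.

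Next I would handle the bookkeeping of the quantum parameter. One must verify that the $q$-powers accumulated by ordering the non-commutative step weights in $N(p)$ agree with those produced by the path partition function; this is where the commutation relations \eqref{qcomm} of the $T$-system enter and must be checked to reproduce $y_1y_3=y_3y_1=q$, $y_1y_2=q^2 y_2y_1$, $y_2y_3=q^2 y_3y_2$. Then \eqref{qsol} reads $R_j = T_{i,j} = M(p)_{1,1}\,T_{i_1,j_1} = M(p)_{1,1}\,R_0$ (since the right endpoint lies at height $1$ or $0$ on $\bj_0$, giving $R_0$ after the $2$-periodic identification), and $M(p)_{1,1}$ equals the sum over $1\to1$ network paths of length $2n=2j$, i.e.\ the quantum path partition function of length $2j$. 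This yields exactly the stated formula $R_j=(\text{partition function})\times R_0$.

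The main obstacle I expect is the precise matching of the $q$-weights and of the endpoint normalization between the two-row network and the linear $[0,3]$ path model: the network records information in a $2\times2$ transfer matrix whereas the quantum path records it in a height variable on four sites, so one must confirm that the reduction $T_{i,\ell}\mapsto R_\ell$ sends the matrix product $M(p)_{1,1}$ bijectively, with matching $q$-powers, onto the generating function of Dyck-type paths with weights $1$ and $y_1,y_2,y_3$. Establishing this bijection of weighted paths, together with the verification that the commutation relations of $y_1,y_2,y_3$ are consequences of \eqref{qcomm}, is the heart of the argument; once it is in place, the theorem follows immediately from Theorem \ref{solq}.
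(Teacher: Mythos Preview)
Your overall strategy---specialize Theorem~\ref{solq} under the $2$-periodicity $T_{i,\ell}\mapsto R_\ell$ and then identify the result with the quantum path partition function---is precisely the route the paper takes. Where you diverge is in the identification step. You propose to build a weight-preserving bijection between $1\to1$ paths in the two-row network $N(p)$ and paths on $[0,3]$ returning to the origin; this is plausible but delicate, since the network has two rows while the path model has four heights, and you would have to explain carefully how a single $UV$ block of the network encodes two steps on $[0,3]$ with the correct ordered $y$-weights.

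The paper bypasses the combinatorial matching entirely. It observes that both partition functions are the $(1,1)$ entry of the $j$-th power of a $2\times2$ transfer matrix: $(UV)^j$ for the network and $\cT^j$ for the path model, with $\cT$ the two-step matrix \eqref{transmat}. A short direct computation (using only $y_2y_1=q^{-1}R_0^{-2}$ and the conjugation action of $R_0$ on $y_2+y_3$) shows that $UV$ and $\cT$ are conjugate by a diagonal matrix whose $(1,1)$ entry is $1$; see \eqref{ctud}. Since such a conjugation preserves the $(1,1)$ entry of every power, $((UV)^j)_{1,1}=(\cT^j)_{1,1}$ and the two formulas coincide. This is considerably shorter than constructing and verifying a path bijection, and it avoids the bookkeeping of $q$-powers you flag as the main obstacle.

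One small slip: under $\bj_0$ the projection of $(i,j)$ has its right endpoint at height~$1$, so Theorem~\ref{solq} first yields $R_j=((VU)^{j-1})_{1,1}\,R_1$, not $M(p)_{1,1}R_0$; the paper then rewrites this as $((UV)^{j})_{1,1}\,R_0$ using the triangular form of $U$ and $V$.
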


A reformulation of this result uses the ``two-step" transfer matrix $\cT$ whose entries are the weights of the 
paths of length $2$ that start (and end) at the even points $0$ and $2$:
\begin{equation}
\label{transmat} \cT=\begin{pmatrix} y_1 & 1 \\
y_2y_1 & y_2+y_3 \end{pmatrix} =\begin{pmatrix}
w\left(\raise -.2truecm \hbox{$ {\epsfxsize=1.cm \epsfbox{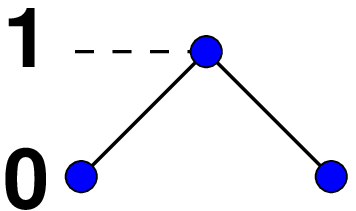}}$} \right) & 
w\left( \raise -.4truecm \hbox{$  {\epsfxsize=1.cm \epsfbox{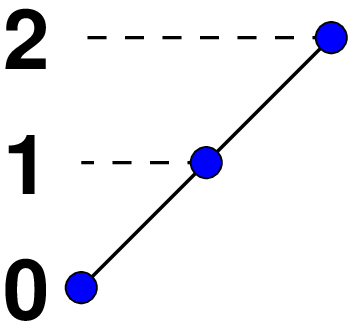}}$}\right) \\
w\left( \raise -.4truecm \hbox{$ {\epsfxsize=1.cm \epsfbox{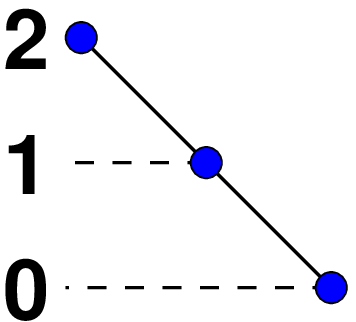}}$}\right) & 
w\left( \raise -.2truecm \hbox{$ {\epsfxsize=1.cm \epsfbox{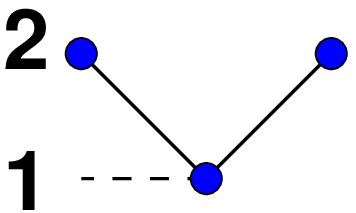}}$}\right) 
\end{pmatrix}
\end{equation}
The theorem may be rephrased as the following identity:
\begin{equation}
\label{rephrase}
 R_j =(\cT^j)_{1,1}\, R_0  \qquad (j\in \Z_+)
 \end{equation}

We may now apply the Theorem \ref{solq}
above to obtain an alternative quantum network formulation of the $A_1$ $Q$-system solutions. 
We start with the fundamental initial data $x_0=(R_0,R_1)$,
with $R_0R_1=qR_1R_0$. We have by Theorem \ref{solq}, for all $j\in\Z_{>0}$:
\begin{equation}\label{phrase}
R_j=T_{j\, {\rm mod}\, 2,j}=\big((VU)^{j-1}\big)_{1,1}R_1=\big((UV)^{j}\big)_{1,1}R_0
\end{equation}
where, due to the periodicity property, we have
\begin{eqnarray*}
&&V=V(R_1,R_0)=\begin{pmatrix}
R_1R_0^{-1} & R_0^{-1} \\
0 & 1 
\end{pmatrix}
\qquad U=U(R_0,R_1)=\begin{pmatrix}
1 & 0\\
q^{-1}R_1^{-1} & R_0R_1^{-1}
\end{pmatrix}\\
&& 
UV=\begin{pmatrix}
R_1R_0^{-1} & R_0^{-1} \\
q^{-1}R_0^{-1} & (R_0+R_0^{-1})R_1^{-1}
\end{pmatrix}=\begin{pmatrix}
y_1 & R_0^{-1} \\
q^{-1}R_0^{-1} & q^{-1}(y_2+y_3)
\end{pmatrix}
\end{eqnarray*}
Comparing with eq.\eqref{transmat}, and noting that 
$y_2y_1=R_1^{-1}R_0^{-1}R_1R_0^{-1}=q^{-1}R_0^{-2}$, and that
$R_0^{-1}(y_2+y_3)R_0=q^{-1}(y_2+y_3)$, we arrive at the relation:
\begin{equation}\label{ctud}
UV=\begin{pmatrix} 1 & 0\\ 0 & R_0^{-1}\end{pmatrix} \cT \begin{pmatrix} 1 & 0\\ 0 & R_0\end{pmatrix} 
\end{equation}
from which we deduce that Eqns. \eqref{rephrase} and \eqref{phrase} are equivalent, 
as the conjugation does not affect the $(1,1)$ matrix element.

\subsection{Non-commutative $Q$-system: a solution via non-commutative networks}
The fully non-commutative version of the $A_1$ $Q$-system was studied and solved in \cite{DFK09b}.
It reads:
\begin{equation}\label{konts} R_{n+1} R_n^{-1} R_{n-1}=R_n+R_n^{-1} \qquad (n\in\Z) 
\end{equation}
for $R_n$ some formal non-commutative variables subject to the quasi-commutation relations
\begin{equation}\label{ncqcom} R_n R_{n+1}=R_{n+1}C R_n \end{equation}
where $C$ is another fixed non-commuting variable. The quantum case is recovered when $C=q$ is central.

Using the relations \eqref{ncqcom},
the non-commutative $A_1$ $Q$-system \eqref{konts}
may be rewritten as
\begin{equation}\label{ncqsys} R_{n+1}CR_{n-1}=R_n^2+1 \qquad (n\in \Z) \end{equation}

Let us keep the definition \eqref{ys} for $y_1,y_2,y_3$, now
in terms of the non-commutative initial data $R_0$ and $R_1$.
The solution of \cite{DFK09b} involves ``non-commutative paths" from and to the origin on $[0,3]$, 
with weights $1$ for the steps $i\to i+1$, $i=0,1,2$
and $y_i$ for the steps $i\to i-1$, $i=1,2,3$. We have:
\begin{thm}(\cite{DFK09b})
For $n\in\Z_+$, the solution $R_n$ of the non-commutative $A_1$ $Q$-system
is the partition function for non-commutative paths from the origin
to itself with $2n$ steps, times $R_0$.
\end{thm}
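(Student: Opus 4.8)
The plan is to reprove the theorem by the same transfer-matrix/network route used for the $q$-commutative $Q$-system in Eqs.~\eqref{ys}--\eqref{ctud}, but now tracking the non-central factors of $C$ at every step. Keeping the definitions \eqref{ys} of $y_1,y_2,y_3$ in terms of the non-commutative data $R_0,R_1$, the first step is to record the exact quasi-commutation relations these weights satisfy. From \eqref{ncqcom} one computes the analogues of $y_1y_3=y_3y_1=q$, $y_1y_2=q^2y_2y_1$, $y_2y_3=q^2y_3y_2$, now with the central $q$ replaced by appropriately placed powers of $C$. This bookkeeping is what makes the non-commutative case more delicate than the quantum one, and it is the computation on which everything else rests.

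Next I would introduce the two-step transfer matrix $\cT$ exactly as in \eqref{transmat}, but with the non-commutative $y_i$, and argue that the partition function for non-commutative paths of length $2n$ equals $(\cT^n)_{1,1}$. The point here is purely combinatorial: any such path from $0$ to $0$ decomposes uniquely into $n$ consecutive length-two segments joining the even sites $\{0,2\}$, and since the weight of a non-commutative path is the product of its step weights \emph{in traversal order}, the associative matrix product $\cT^n$ automatically records the weights in exactly the correct left-to-right order. Thus the claim that $R_n$ equals the partition function times $R_0$ becomes the identity $R_n=(\cT^n)_{1,1}R_0$.

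The final and main step is to establish $R_n=(\cT^n)_{1,1}R_0$ by induction on $n$. The base cases $n=0,1$ are immediate, since $(\cT^0)_{1,1}=1$ and $(\cT^1)_{1,1}=y_1=R_1R_0^{-1}$ give $R_0$ and $R_1$ respectively. For the inductive step I would set $A_n=(\cT^n)_{1,1}$, $B_n=(\cT^n)_{1,2}$ and read off the coupled first-order recursions $A_{n+1}=(A_n+B_ny_2)y_1$ and $B_{n+1}=A_n+B_n(y_2+y_3)$ from $\cT^{n+1}=\cT^n\cT$, keeping the $A_n,B_n$ factors on the left as their matrix position dictates. Eliminating $B_n$ produces a three-term recursion for $A_n$, which I then match against the rewritten $Q$-system \eqref{ncqsys} after multiplying on the right by $R_0$ and using \eqref{ncqcom} to move $R_0$ and the $C$-factors into place.

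The hard part will be this last matching. In the quantum case the analogous manipulations used identities such as $y_2y_1=q^{-1}R_0^{-2}$ and $R_0^{-1}(y_2+y_3)R_0=q^{-1}(y_2+y_3)$, where $q$ is central and slides freely through every product. Here each such relation acquires $C$-factors that do not commute with the $R_n$, so the induction must be carried out keeping these factors in their precise positions; the crux is to verify that the three-term recursion coming from $\cT$ reproduces \eqref{ncqsys} with every occurrence of $C$ landing exactly where \eqref{ncqsys} demands. Once this bookkeeping is carried out consistently with the relations established in the first step, the theorem follows.
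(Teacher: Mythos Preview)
Your plan is sound and would work, but it is \emph{not} the route the paper takes. In the paper this theorem is quoted from \cite{DFK09b}; the new contribution here is the network reformulation via Theorem~\ref{NCmain}, and the paper then \emph{deduces} the path statement from the network one rather than attacking $(\cT^n)_{1,1}$ directly. Concretely, the paper introduces the non-commutative $U,V$ matrices \eqref{ncdu}, proves the exchange relation $V(a,b)U(b,c)=U(a,b')V(b',c)$ (Lemma~\ref{usefulem}), and uses it to show by a one-line induction that $R_{n+p}=((U_pV_p)^n)_{1,1}R_p$. The link to paths is then a conjugation: one checks, using $y_2y_1=R_0^{-1}C^{-1}R_0^{-1}$ and $R_0(y_2+y_3)R_0^{-1}=C^{-1}R_1^{-1}R_0^{-1}+R_0R_1^{-1}$, that $UV=\operatorname{diag}(1,R_0^{-1})\,\cT\,\operatorname{diag}(1,R_0)$, so $((UV)^n)_{1,1}=(\cT^n)_{1,1}$ and the path partition function identity follows.

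The difference matters precisely at the point you flag as ``the hard part.'' Your approach requires eliminating $B_n$ from the coupled system $A_{n+1}=(A_n+B_ny_2)y_1$, $B_{n+1}=A_n+B_n(y_2+y_3)$ and then matching the resulting three-term relation for $A_n$ against \eqref{ncqsys}, with every $C$ in its exact position; this is doable but genuinely delicate. The paper sidesteps all of that: the induction in Theorem~\ref{NCmain} is structural, each step being a single application of Lemma~\ref{usefulem}, and the only non-commutative bookkeeping left is the two short identities needed for the conjugation \eqref{ctud}. So your proof would be self-contained at the cost of heavier computation, while the paper's buys cleanliness by routing through the mutation identity that is, after all, the conceptual heart of the whole paper.
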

As before, this is best expressed by use of the ``two-step" transfer matrix $\cT$,
still given by the expression \eqref{transmat} in terms of $R_0$ and $R_1$, and the
identity \eqref{rephrase} still holds.

The network solution of the quantum $A_1$ $Q$-system may be adapted for the 
fully non-commutative one as follows.
For non-commuting variables $a,b$, we introduce the matrices:
\begin{equation}\label{ncdu} 
U(a,b)=\begin{pmatrix} 1 & 0\\ C^{-1}b^{-1} & a b^{-1}\end{pmatrix} \qquad 
V(a,b)=\begin{pmatrix}a b^{-1} & b^{-1}\\ 0 & 1\end{pmatrix}
\end{equation}

We have the following generalization of Lemma \ref{prealem}.
\begin{lemma}\label{usefulem}
Assume $a,b,c$ have the quasi-commutations $ba=aCb$, $bc=cCb$, and $ac=ca$, then
we have the equation:
\begin{equation}\label{mutmatnc}
V(a,b)U(b,c)=U(a,b')V(b',c)
\end{equation}
for $b'$ defined via $b'Cb=ac+1$. 
Moreover, with this definition, $cb'=b'C c$ and $a b'=b'C a$.
\end{lemma}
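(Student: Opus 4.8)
The plan is to prove Lemma \ref{usefulem} by direct matrix multiplication, exactly paralleling the proof of Lemma \ref{prealem}, with the central parameter $q$ replaced throughout by the non-commuting variable $C$. First I would compute both sides of \eqref{mutmatnc} as explicit $2\times 2$ matrices. Multiplying out $V(a,b)U(b,c)$ using the definitions \eqref{ncdu} gives a matrix whose entries are built from $a,b,c,C$; since $C$ is no longer central, I must be careful to keep all factors in their correct left-to-right order and to record every instance where a quasi-commutation $ba=aCb$, $bc=cCb$, or $ac=ca$ is applied. Similarly I would expand $U(a,b')V(b',c)$.

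Next I would equate the two products entry by entry. As in the commutative-parameter case, the $(1,1)$ entry should yield the defining relation $b'Cb = ac+1$, which is precisely how $b'$ is introduced in the statement, so that equation is satisfied by fiat. The remaining entries then must be checked for consistency. The key verification, mirroring the earlier proof, comes from comparing the $(2,2)$ entries: one expects to obtain a relation of the form $b'Cb = 1 + (\text{conjugate of } ac)$, and reconciling it with the $(1,1)$ relation forces the auxiliary quasi-commutations $cb'=b'Cc$ and $ab'=b'Ca$. Concretely, from $b'Cb = ac+1$ and the fact that $a$ commutes with $c$ (hence with $ac+1$), one derives $a b' C b = b' C b\, a = b' C a C b$ via $ba=aCb$, and then cancels the invertible factor $Cb$ on the right to get $ab'=b'Ca$; the relation $cb'=b'Cc$ follows symmetrically. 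These derived relations are exactly the claimed ``moreover'' statements, so establishing them simultaneously proves both the matrix identity and the supplementary commutations.

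The main obstacle will be bookkeeping rather than conceptual difficulty: because $C$ is a genuine non-commuting variable instead of a scalar, I cannot freely slide it past $a$, $b$, or $c$, and every cancellation of an invertible element (for instance cancelling $Cb$ or $c^{-1}$) must be performed on the correct side and justified by one of the three hypothesized quasi-commutations. In particular, the step where the commutative proof used ``$a$ commutes with $1+ac$'' still applies since $ac=ca$ is assumed, but propagating this through to the conjugation relations requires inserting $C$ in the right positions. I would organize the computation so that each application of a quasi-commutation relation is isolated, which both guards against sign/order errors and makes transparent that the specialization $C=q$ central recovers Lemma \ref{prealem} verbatim.
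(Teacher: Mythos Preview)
Your proposal is correct and follows essentially the same route as the paper: compute both matrix products explicitly, read off $b'Cb=ac+1$ from the $(1,1)$ entries, and then derive $ab'=b'Ca$ and $cb'=b'Cc$ from this relation together with $ac=ca$ and $ba=aCb$, $bc=cCb$, which in turn forces the $(2,2)$ entries to agree. Your explicit cancellation argument $a\,b'Cb=(b'Cb)\,a=b'C\,aCb\Rightarrow ab'=b'Ca$ is exactly the step the paper summarizes as ``from the first identity we deduce that $ab'=b'Ca$.''
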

\begin{proof}
We compute directly
$$V(a,b)U(b,c)=\begin{pmatrix}a b^{-1} & b^{-1}\\ 0 & 1\end{pmatrix}
\begin{pmatrix} 1 & 0\\ C^{-1}c^{-1} & b c^{-1}\end{pmatrix}
= \begin{pmatrix} (a+c^{-1})b^{-1} & c^{-1}\\ C^{-1}c^{-1} & b c^{-1}\end{pmatrix}$$
and
$$U(a,b')V(b',c)=\begin{pmatrix} 1 & 0\\ C^{-1}b'^{-1} & a b'^{-1}\end{pmatrix}
\begin{pmatrix}b' c^{-1} & c^{-1}\\ 0 & 1\end{pmatrix}
=\begin{pmatrix} b'c^{-1}& c^{-1}\\ C^{-1}c^{-1} & C^{-1}b'^{-1} c^{-1}+a b'^{-1}\end{pmatrix}$$
Identifying the two results, we obtain $b'c^{-1}bc=b'Cb=1+a c$ from the 
(1,1) element identification, and
$b'C b= 1+b'C a b'^{-1} c$ from the $(2,2)$ one. 
But from the first identity we deduce that $ab'=b'Ca$
(as well as $cb'=b'C c$), and the Lemma follows.
\end{proof}

\begin{thm}\label{NCmain}
For $n\in\Z_+$, $p\in\Z$, the solution $R_{n+p}$ of the non-commutative $A_1$ $Q$-system is expressed
in terms of the initial data $x_p=(R_p,R_{p+1})$ as:
$$ R_{n+p}=\big((V_pU_p)^{n-1}\big)_{1,1}R_{p+1}=\big((U_pV_p)^n\big)_{1,1}R_p$$
where 
$V_p=V(R_{p+1},R_p)$ and $U_p=U(R_p,R_{p+1})$ in terms of the matrices of eq.\eqref{ncdu}.
\end{thm}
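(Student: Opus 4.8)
The plan is to prove Theorem \ref{NCmain} by induction on $n$, closely paralleling the inductive structure of Theorem \ref{solq}, but now in the fully non-commutative setting where the central parameter $q$ is replaced by the non-central variable $C$. The key technical input is Lemma \ref{usefulem}, which is the non-commutative analogue of Lemma \ref{prealem} and encodes a single mutation of the network as the matrix identity $V(a,b)U(b,c)=U(a,b')V(b',c)$ with $b'$ determined by $b'Cb=ac+1$. Since the $A_1$ $Q$-system is precisely the $2$-periodic restriction of the $T$-system, the matrices $V_p=V(R_{p+1},R_p)$ and $U_p=U(R_p,R_{p+1})$ play here the role that the boundary matrices $V^{(i)},U^{(i)}$ played in Section \ref{section5}, with the zig-zag projection specializing to the periodic staircase.

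First I would establish the base case. For $n=0$ the claim $R_p=\big((U_pV_p)^0\big)_{1,1}R_p=R_p$ is trivial since the empty product is the identity, and for $n=1$ one reads off $R_{p+1}=\big(U_pV_p\big)_{1,1}R_p$ directly from the $(1,1)$ entry of the product $U_pV_p$, which by the computation in Lemma \ref{usefulem} equals $R_{p+1}R_p^{-1}$. For the inductive step I would assume the formula holds up to length $2n$ and compute $\big((U_pV_p)^{n+1}\big)_{1,1}R_p$. The crucial move is to use the quasi-commutation relation \eqref{ncqcom}, namely $R_nR_{n+1}=R_{n+1}CR_n$, to verify that the weights $y_1,y_2,y_3$ of \eqref{ys} built from $R_p,R_{p+1}$ satisfy the same quasi-commutations as in the central case (with $q$ replaced throughout by the non-central $C$), so that Lemma \ref{usefulem} applies verbatim to propagate the mutation through the network. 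The equivalence of the two stated expressions, $\big((V_pU_p)^{n-1}\big)_{1,1}R_{p+1}$ and $\big((U_pV_p)^n\big)_{1,1}R_p$, follows by the telescoping identity $(U_pV_p)^n = U_p(V_pU_p)^{n-1}V_p$ together with the triangular structure of $U$ and $V$, which forces the boundary entries to contribute exactly the conversion factor $R_p^{-1}R_{p+1}$.

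The main obstacle I expect is bookkeeping the non-central factors of $C$ correctly: in the quantum case $q$ is central and can be freely moved across matrix entries, whereas here $C$ does not commute with the $R_j$, so each application of Lemma \ref{usefulem} produces relations such as $cb'=b'Cc$ and $ab'=b'Ca$ whose placement of $C$ must be tracked through every matrix multiplication. Concretely, when I pass from $(U_pV_p)^n$ to $(U_pV_p)^{n+1}$ I must check that the newly produced mutated variable $b'$ satisfies the correct quasi-commutation with the variables of the next chip so that the hypotheses of Lemma \ref{usefulem} remain satisfied at the next step; this is where the proof genuinely differs from the $q$-commutative one and is the point requiring the most care. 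Everything else is a routine verification that the matrix recursion unfolds the $Q$-system relation \eqref{ncqsys} $R_{n+1}CR_{n-1}=R_n^2+1$, which is exactly the content of $b'Cb=ac+1$ under the periodic identification $a=R_{p+1}$, $b=R_p$, $c=R_{p-1}$ (or the appropriate shift), so I would conclude by matching the $(1,1)$ entry of the transfer product against the defining recursion.
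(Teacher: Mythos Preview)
Your overall strategy---induction on $n$ with Lemma \ref{usefulem} as the engine---is the right one, and your triangularity argument for the equivalence of the two displayed formulas is clean and correct. But you are missing the one observation that makes the inductive step a single line, and in its absence you both overcomplicate the proof and make a concrete error in the variable identification.

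The point is that in the $2$-periodic $Q$-system setting, applying Lemma \ref{usefulem} to $V_pU_p=V(R_{p+1},R_p)\,U(R_p,R_{p+1})$ means taking $a=R_{p+1}$, $b=R_p$, and $c=R_{p+1}$ (not $c=R_{p-1}$, as you wrote; the periodic staircase returns to the same height, so $a=c$). Then $b'Cb=ac+1=R_{p+1}^2+1$, and by the $Q$-system relation \eqref{ncqsys} this forces $b'=R_{p+2}$. Hence one application of the lemma yields the global shift identity
\[
V_pU_p \;=\; U(R_{p+1},R_{p+2})\,V(R_{p+2},R_{p+1}) \;=\; U_{p+1}V_{p+1},
\]
so that $(V_pU_p)^{n-1}=(U_{p+1}V_{p+1})^{n-1}$ in one stroke, and the induction hypothesis at $(n-1,p+1)$ closes the argument immediately. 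There is no need to track $C$ through a chain of chips, no need to verify quasi-commutations of the $y_i$ (those belong to the path/transfer-matrix formulation, not to the $U,V$ network), and no issue with ``the next chip'': the mutation replaces the entire block $V_pU_p$ by a block built from $R_{p+1},R_{p+2}$, which already satisfies \eqref{ncqcom}. The non-centrality of $C$ is fully absorbed into Lemma \ref{usefulem} and never has to be handled separately in the induction.
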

\begin{proof}
By induction on $n$. The result is clear for $n=1$.
Assume that the Theorem holds for $n-1,p+1$.
Applying the Lemma \ref{usefulem} for $a=c=R_{p+1}$ and $b=R_{p}$, we get $x=R_{p+2}$ and
$V_{p}U_{p}=U_{p+1}V_{p+1}$. We deduce that:
$$\big((V_pU_p)^{n-1}\big)_{1,1}R_{p+1}=\big((U_{p+1}V_{p+1})^{n-1}\big)_{1,1}R_{p+1}=R_{(n-1)+(p+1)}=R_{n+p}$$
and the Theorem holds for $n,p$. Finally, the translational invariance of \eqref{ncqsys}
implies that $R_{n+p}$ is the same {\it function} of $(R_p,R_{p+1})$ as $R_n$ of $(R_0,R_1)$, independently of $p\in\Z$.
The Theorem follows.
\end{proof}

The direct connection between the non-commutative network formulation and the non-commutative
path formulation is the same relation \eqref{ctud} between $VU$ and $\cT$ as in the quantum  case.
It is now a consequence of: $y_2y_1=R_1^{-1}R_0^{-1}R_1R_0^{-1}=R_0^{-1}C^{-1}R_0^{-1}$
and $R_0(y_2+y_3)R_0^{-1}=C^{-1}R_1^{-1}R_0^{-1}+R_0R_1^{-1}$, by use of the quasi-commutations
\eqref{ncqcom}.

%
%

\section{Discussion: connection to the quantum lattice Liouville equation}\label{conclusion}



In this paper, we have introduced and solved the quantum $A_1$ $T$-system
in terms of an arbitrary admissible data, by means of a quantum path model. 
This system turns out to be closely related 
to the  quantum lattice Liouville equation of \cite{FKV}. 

The $T$-systems in general are related to the so-called $Y$-systems via a
birational transformation \cite{KNS}. The following is a quantum
version of this transformation.  Define
$$\chi_{i,j}=\left( T_{i+1,j}T_{i-1,j}\right)^{-1}\qquad (i,j\in \Z).$$
Then
\begin{eqnarray*}
\chi_{i,j-1}\chi_{i,j+1}&=&T_{i+1,j-1}^{-1}(T_{i-1,j+1}T_{i-1,j-1})^{-1} T_{i+1,j+1}^{-1}\\
&=&qT_{i+1,j-1}^{-1} (1+T_{i,j}T_{i-2,j})^{-1}T_{i+1,j+1}^{-1}\\
&=&q (T_{i+1,j+1}T_{i+1,j-1})^{-1}(1+T_{i,j}T_{i-2,j})^{-1}\\
&=&q^2 \chi_{i+1,j}(1+\chi_{i+1,j})^{-1}\chi_{i-1,j}(1+\chi_{i-1,j})^{-1}.
\end{eqnarray*}
Note that all the factors in the last line commute. 

This system may be called quantum $Y$-system for $A_1$:
If we set $q=1$ and denote $Y_i(j/2)=\chi_{i,j}$ it coincides with the
$A_1$ $Y$-system \cite{Zam,RTV}.  In the non-commuting case, it coincides
(upon a reordering of the left hand side and a renormalization of the
variables, $\chi\mapsto q \chi$) with the quantum lattice Liouville
equations of \cite{FKV}.

The variables $\chi_{i,j}$ have local commutation relations within
each cluster. 
\begin{lemma}
$$ \chi_{k\pm 1,l+1}\chi_{k,l}=q^2 \chi_{k,l}\chi_{k\pm 1,l+1}, $$
while all other pairs of variables commute.
\end{lemma}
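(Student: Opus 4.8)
The plan is to compute the commutation relations for $\chi_{i,j}$ directly from its definition $\chi_{i,j}=(T_{i+1,j}T_{i-1,j})^{-1}$ together with the $q$-commutation relations \eqref{qcomm} established for the $T$-variables within a single cluster. The key observation is that $\chi_{i,j}$ depends on the pair $T_{i+1,j}$ and $T_{i-1,j}$, which both sit at height $j$, and that the commutation of $\chi_{k,l}$ with $\chi_{k',l'}$ is therefore governed entirely by how the four $T$-variables $T_{k+1,l},T_{k-1,l},T_{k'+1,l'},T_{k'-1,l'}$ commute pairwise.

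First I would record the relevant special cases of \eqref{qcomm}. Since two $T$-variables at the same height $l\equiv l'\pmod 2$ commute (by the Remark following that lemma), the only nontrivial contributions come from pairs whose heights differ by an odd amount; for the claimed relation the heights are $l$ and $l+1$, so $m=1$ in \eqref{qcomm}, giving a factor $q^{\pm 1}$ depending only on the horizontal separation $2k$. Concretely, for variables $T_{i,j}$ and $T_{i\pm(2k+1),j-1}$ one reads off a factor $q^{(-1)^k}$. I would then simply expand
\begin{equation*}
\chi_{k\pm1,l+1}\chi_{k,l}=\bigl(T_{k\pm1+1,l+1}T_{k\pm1-1,l+1}\bigr)^{-1}\bigl(T_{k+1,l}T_{k-1,l}\bigr)^{-1}
\end{equation*}
and move the second factor past the first, collecting the four elementary $q$-commutation factors from \eqref{qcomm}, one for each of the four cross-pairs of $T$'s (taking care that inverses reverse the sign of the exponent). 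The same computation with a generic pair $\chi_{k',l'}$ at larger separation, or at the same height parity, should yield an overall exponent of $0$, establishing that all other pairs commute.

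The main obstacle I expect is purely bookkeeping: correctly tracking the signs $(-1)^k$ coming from the alternating structure of \eqref{qcomm} across all four cross-pairs, and verifying that these four factors add up to the net exponent $+2$ in the $+$ case and the corresponding value in the $-$ case (rather than cancelling). In particular I would need to check that the horizontal offsets of the four cross-pairs — which are $0,\pm2$ about the central separation — produce factors whose exponents reinforce rather than cancel, so that $q^2$ (not $q^0$ or $q^{-2}$) emerges. Once the four elementary factors are assembled with the right signs, the lemma follows immediately; the only care needed is the consistent use of the convention in \eqref{qcomm} and the fact that passing an inverse $T^{-1}$ through another variable flips the sign of the exponent relative to passing $T$ itself.
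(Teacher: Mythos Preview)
Your approach is correct and uses the same ingredient as the paper, namely the relations \eqref{qcomm}. The paper organizes the computation a bit more efficiently: instead of tracking all four $T$-$T$ cross-pairs at once, it first proves the intermediate fact that a single $T_{i,j}$ (with $j\geq l$) commutes with $\chi_{k,l}$ unless $(i,j)=(k,l+1)$, in which case $T_{k,l+1}\chi_{k,l}=q^2\chi_{k,l}T_{k,l+1}$. The point is that the two elementary factors coming from $T_{k+1,l}$ and $T_{k-1,l}$ always cancel (their horizontal offsets differ by $2$, so the exponents $(-1)^{k'}$ are opposite) except at the single symmetric position $i=k$, $m=1$, where both offsets are $\pm1$ and the exponents reinforce. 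Once this is in hand, the $\chi$-$\chi$ statement is immediate: $\chi_{k',l'}=(T_{k'+1,l'}T_{k'-1,l'})^{-1}$ fails to commute with $\chi_{k,l}$ only when one of $T_{k'\pm1,l'}$ equals $T_{k,l+1}$, i.e.\ precisely when $(k',l')=(k\pm1,l+1)$. This packaging makes the ``all other pairs commute'' clause a one-line consequence and removes the sign-bookkeeping you identified as the main obstacle; your direct four-factor route reaches the same conclusion but requires that bookkeeping to be carried out explicitly in each case.
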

\begin{proof}
Let $j\geq l$. If $(i,j)\neq (k,l+1)$, then the commutation relations
\eqref{qcomm} imply that
$T_{i,j}\chi_{k,l}=\chi_{k,l}T_{i,j}.$
Otherwise,
$$ T_{k,l+1} \chi_{k,l}=T_{k,l+1}T_{k+1,l}^{-1}T_{k-1,l}^{-1}=q^2\chi_{k,l}T_{k,l+1},$$
and the Lemma follows.
\end{proof}

Note that here we do not impose the periodicity condition of
\cite{FKV}, that $\chi_{i+2N,j}=\chi_{i,j}$.  Here it would be
implemented by a periodicity condition of the form
$T_{i+2N,j}=T_{i,j}$.  The solution of this paper holds for any choice of
boundary conditions, and may therefore be {\it restricted} to these
special periodicity conditions on $T_{i,j}$.

%
%
%
%
%


\end{document}